\documentclass[11pt]{article}
\pdfoutput=1

\usepackage{ccanonne}
\usepackage{palatino}
\usepackage{braket}
\usepackage{booktabs}
\usepackage{colortbl}
\usepackage{comment}

\newcommand{\cyan}[1]{\textcolor{cyan!70!blue!100}{#1}}

\DeclarePairedDelimiter\parens{\lparen}{\rparen}

\DeclareMathOperator*{\avg}{avg}
\newcommand{\E}{\shortexpect}
\newcommand{\wh}[1]{\widehat{#1}}

\newcommand{\bj}{\boldsymbol{j}}

\newcommand{\bJ}{\boldsymbol{J}}

\newcommand{\bX}{\boldsymbol{X}}
\newcommand{\bY}{\boldsymbol{Y}}
\newcommand{\bmu}{\boldsymbol{\mu}}
\newcommand{\bsigma}{\boldsymbol{\sigma}}

\colorlet{tableheadcolor}{gray!25} %
\colorlet{tablerowcolor}{gray!10} %

\addbibresource{references.bib}
\addbibresource{morerefs.bib}


\title{\vspace{-3ex}\bfseries Uniformity testing when you have the source code\vspace{-0.5ex}}

\author{Cl\'ement L. Canonne\thanks{The University of Sydney.} 
\and Robin Kothari\thanks{Google Quantum AI.} 
\and Ryan O'Donnell\thanks{Carnegie Mellon University. Supported by ARO grant W911NF2110001 and by a gift from Google Quantum~AI.}}
\date{\vspace{-5ex}}

\begin{document}
\maketitle
\begin{abstract}
  We study quantum algorithms for verifying properties of the output probability distribution of a classical or quantum circuit, given access to the source code that generates the distribution. We consider the basic task of uniformity testing, which is to decide if the output distribution is uniform on $[\dims]$ or $\dst$-far from uniform in total variation distance.  More generally, we  consider identity testing, which is the task of deciding if the output distribution equals a known hypothesis distribution, or is $\dst$-far from it. For both problems, the previous best known upper bound was $O(\min\{\dims^{1/3}/\dst^{2},\dims^{1/2}/\dst\})$. 
  Here we improve the upper bound to $O(\min\{\dims^{1/3}/\dst^{4/3}, \dims^{1/2}/\dst\})$, which we conjecture is optimal.
\end{abstract}

\section{Introduction}
For 30 years we have known that quantum computers can solve certain problems significantly faster than any known classical algorithm. 
Traditionally, most of the research in this area has focused on decision problems (like SAT) or function problems (like Factoring), where for each possible input there is a unique ``correct'' output.
However, we have also found that quantum computers can yield speedups for the task of \emph{sampling} from certain probability distributions. Prominent examples include boson sampling~\cite{AA13} and random circuit sampling~\cite{BIS+18}. 
Sampling tasks have seemed more natural for NISQ-era quantum computation, and indeed many of the first candidate experimental demonstrations of quantum advantage have been for sampling problems~\cite{AAB+19}. 

One of the downsides of sampling problems is the challenge of \emph{verifying} the output of an algorithm, whether classical or quantum, that claims to sample from a certain distribution. 
As a simple example, consider a classical or quantum algorithm that implements a supposed hash function with output alphabet $[\dims]\eqdef\{1,\ldots,\dims\}$. The algorithm designer claims that the output distribution of this hash function is uniform on $[\dims]$. %
If $\p$ denotes the actual output distribution of the algorithm, and $\uniform_\dims$ denotes the uniform distribution on $[\dims]$, then we would like to test whether $\p=\uniform_\dims$, and reject the claim if $\p$ is in fact $\dst$-far from $\uniform_\dims$ in total variation distance, meaning $\frac{1}{2} \norm{\p-\uniform_\dims}_1 > \dst$.
(We will also consider other distance measures in this work, since the complexity of the testing task is sensitive to this choice.) 

This verification task is called ``uniformity testing'' (in total variation distance) and its complexity is well studied in the classical literature. If we only have access to samples from $\p$, but are not allowed to inspect the algorithm that produces these samples, it is known that $\Theta(\dims^{1/2}/\dst^2)$ samples are necessary and sufficient to solve this problem; there are various classical algorithms that achieve this bound (starting with that of~\cite{Paninski08}; see, \eg~\cite{CanonneTopicsDT2022} for a detailed survey and discussion), and it is also not possible to do better with a quantum algorithm. But what if---as in the examples above---we \emph{do} have access to the algorithm that produces $\p$? Can we improve on this complexity if we have access to the ``source code'' of the algorithm? 

\paragraph{Having the source code.} To clarify, the ``source code'' for a classical randomized sampling algorithm means a randomized circuit (with no input) whose output is one draw from~$\p$.
More generally, the ``source code'' for a quantum sampling algorithm means a unitary quantum circuit (with all input qubits fixed to~$\ket{0}$) which gives one draw from~$\p$ when some of its output bits are measured in the standard basis and the rest are discarded.\footnote{This is sometimes termed the ``purified quantum query access model'', and is the most natural and general model.  The ``quantum string oracle'', referenced later in \Cref{table:1}, refers to a situation in which one assumes a very specific type of source code for~$\p$ (thus making algorithmic tasks easier). See \Cref{sec:prelims} for details and~\cite{Bel19} for a thorough discussion.}
The simplest way to use the code $C$ for~$\p$ is to run it, obtaining one sample.  
If $C$ has size~$S$, then getting one sample this way has cost~$S$.  
Another way to use the code~$C$ is to deterministically compute all its output probabilities; this gives one perfect information about~$\p$, but has cost bound~$2^S$.  
But quantum computing has suggested a third way to use the code: ``running it in reverse''.  
For example, Grover's original algorithm~\cite{Gro96} can be seen as distinguishing two possibilities for $\p$ on $[2]$, namely $\p_1 = 0$ or $\p_1 = 1/N$, while using only $O(N^{1/2})$ forwards/backwards executions of~$C$.
The total cost here is $O(N^{1/2}) \cdot S$, the same as the cost for $O(N^{1/2})$ samples. 

We suggest that the utility of ``having the source code'' for distribution testing problems remains notably underexplored.
Indeed, there is significant room for improvment in the bounds for even the most canonical of all such problems: uniformity testing.  
Our main theorem is the following:
\begin{theorem}\label{thm:main}
  There is a computationally efficient quantum algorithm for uniformity testing with the following guarantees: given $\dst \geq 1/\sqrt{\dims}$, the algorithm makes $O(\dims^{1/3}/\dst^{4/3})$ uses of  ``the code'' for an unknown distribution $\p$ over $[\dims]$, and distinguishes with probability at least~$.99$ between 
  \begin{equation}
      (1)~\p=\uniform_\dims, \qquad \textrm{and}\qquad (2)~\totalvardist{\p}{\uniform_\dims} > \dst.    
  \end{equation}
\end{theorem}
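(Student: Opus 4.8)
The plan is to combine the classical collision / $\ell_2$ approach to uniformity testing with a quantum amplitude-estimation speedup, and then to re-optimize the size of the reference set used (a parameter that, for the target bound, must be taken larger than the collision-testing scale $\sqrt\dims$). The point of departure is the standard $\ell_2$ reduction: by Cauchy--Schwarz, $\totalvardist{\p}{\uniform_\dims}>\dst$ forces $\norm{\p-\uniform_\dims}_2^2>4\dst^2/\dims$, i.e.\ $\norm{\p}_2^2>(1+4\dst^2)/\dims$, whereas $\p=\uniform_\dims$ gives $\norm{\p}_2^2=1/\dims$ exactly; so, under the promise, it suffices to separate these two cases.

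The algorithm has a classical phase and a quantum phase, governed by a parameter $r$. In the classical phase, run the code $r$ times to draw i.i.d.\ samples from $\p$, and record the set $R$ of distinct outcomes and its size $m\eqdef|R|$ (both known exactly). In the quantum phase, let $\mathcal A$ be the unitary that runs the code once to prepare $\sum_x\sqrt{\p_x}\ket{x}\ket{\psi_x}$ and then coherently writes the bit $\mathbf 1[x\in R]$ into an ancilla, so that the ancilla reads $1$ with probability exactly $\p(R)\eqdef\sum_{x\in R}\p_x$; run amplitude estimation on $\mathcal A$ and accept iff the resulting estimate of $\p(R)$ is at most $(1+\Theta(\dst^2))\cdot m/\dims$. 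The membership predicate $\mathbf 1[x\in R]$ is classical post-processing on the sample register and uses no further calls to the code, and the junk registers $\ket{\psi_x}$ are harmless, since amplitude estimation needs only the reflection $\mathcal A(2\ket{0}\!\bra{0}-I)\mathcal A^\dagger$ about $\mathcal A\ket{0}$. Because the baseline $m/\dims$ is known exactly, amplitude estimation may be run with a \emph{fixed} number $t=\Theta\!\bigl(\frac{1}{\dst^2}\sqrt{\dims/m}\bigr)$ of iterations --- just enough to certify whether $\p(R)$ is at most $m/\dims$ or exceeds it by a $(1+\Theta(\dst^2))$ factor --- with \emph{no} logarithmic overhead, each iteration costing $O(1)$ calls to the code and its inverse. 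As $m=\Theta(r)$, the total number of calls to the code is $O\!\bigl(r+\frac{1}{\dst^2}\sqrt{\dims/r}\bigr)$; taking $r=\Theta(\dims^{1/3}/\dst^{4/3})$ balances the two terms, giving $O(\dims^{1/3}/\dst^{4/3})$. Every component (sampling, the predicate, amplitude estimation) is poly-size, and $O(1)$ repetitions boost the success probability to $0.99$.

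Correctness reduces to the statistic $D\eqdef\p(R)-m/\dims=\sum_{x\in R}(\p_x-1/\dims)$. In the completeness case $\p=\uniform_\dims$ one has $D=0$ \emph{identically} (for every $R$), so the amplitude-estimation output is below threshold with high probability. For soundness, one must show that for every $\dst$-far $\p$ the random variable $D$ exceeds roughly $c\dst^2 r/\dims$ with high probability over $R$. Since $x$ falls into $R$ with probability $1-(1-\p_x)^r$, increasing in $\p_x$, a Chebyshev-sum/covariance estimate gives $\E[D]=\Omega(r\norm{\p-\uniform_\dims}_2^2)$, which is $\Omega(r\dst^2/\dims)$, once elements of unusually large mass have been set aside; such elements are detected and their masses estimated directly from the $r$ classical samples (any element of mass $\Omega(\log(r)/r)$ is seen whp) and contribute only nonnegatively to $D$, so they can only help. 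For concentration, the crucial point is that in the ``$\ell_\infty$-flat'' regime $\max_x\p_x=O(1/\dims)$ the subtraction of the \emph{exactly known} $m/\dims$ creates a cancellation: $\operatorname{Var}[D]=O\!\bigl(r\sum_x(\p_x-1/\dims)^2\p_x\bigr)=O\!\bigl(\tfrac r\dims\norm{\p-\uniform_\dims}_2^2\bigr)$, so the fluctuations live on the scale of the distance-from-uniform rather than on the much larger scale $\sqrt r/\dims$. Thus the signal-to-noise ratio is $\Omega(\sqrt{r\dims}\,\norm{\p-\uniform_\dims}_2)=\Omega(\sqrt r\,\dst)=\Omega((\dims\dst^2)^{1/6})$, which is $\Omega(1)$ precisely because the hypothesis $\dst\geq 1/\sqrt\dims$ gives $\dims\dst^2\geq 1$; choosing the constant in $r$ large enough then makes the failure probability as small as needed. (In the borderline regime $\dims=O(1/\dst^2)$, where $\dims^{1/3}/\dst^{4/3}=\Theta(\dims^{1/2}/\dst)$, one may instead invoke the previously known $O(\dims^{1/2}/\dst)$-query tester.)

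The step I expect to be the main obstacle is this soundness analysis. One must simultaneously (i) lower-bound $\E[D]$ by $\Omega(r\dst^2/\dims)$ \emph{uniformly} over all $\dst$-far $\p$ --- in particular for distributions with a few elements of moderate or large mass, and in the regime $r\gg\sqrt\dims$ (which occurs for small $\dst$), where $R$ itself has many internal collisions so that $m\ll r$ --- and (ii) obtain a second-moment- or Bernstein-type bound sharp enough to exhibit the above cancellation: the naive bounded-differences estimate, with increments on the order of $\max_x\p_x$, only yields concentration in the much narrower range $\dims=\Omega(\dst^{-8})$, so one genuinely needs to exploit that the increments in $D$ point in the ``distance-from-uniform'' direction. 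A clean way to organize (i)--(ii) is a case split on the masses $\p_x$ relative to the scales $1/\dims$ and $1/r$: truly heavy elements are dispatched via the classical samples, medium elements only add to $D$, and the light remainder is controlled by the cancellation-enhanced variance bound.
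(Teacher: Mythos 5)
Your proposal follows the same high-level methodology as the paper: a classical Phase~1 using $r=\Theta(\dims^{1/3}/\dst^{4/3})$ samples to build a random variable on $[\dims]$, then a quantum mean/amplitude-estimation Phase~2 whose error scales like $(\text{std.\ dev.})/(\text{number of uses})$, with the two phases balanced. Your quantum step, the exact cancellation in the completeness case, and the parameter balancing are all sound, and your indicator statistic $D=\p(R)-|R|/\dims$ (estimating the Bernoulli $\p(R)$, whose variance is automatically bounded by its mean) is a reasonable variant of the paper's statistic $\bY_j=\frac{\dims}{r}X_j-1$, whose mean is an unbiased estimator of $\chisquare{\p}{\uniform_\dims}$. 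The genuine gap is exactly where you flagged it: the soundness analysis, uniform over all $\dst$-far $\p$, is not carried out, and the intermediate claims you sketch for it fail as stated. The bound $\E[D]=\Omega(r\normtwo{\p-\uniform_\dims}^2)$ is false even after removing heavy elements: take $\dst$ near $1/\sqrt{\dims}$ so that $r=C\dims$ with $C$ a large constant, and put the excess mass on $k$ elements of weight $b/\dims$ (these are \emph{not} heavy by your $\Omega(\log r/r)$ criterion, and can even satisfy $\norminf{\p}=O(1/\dims)$); a direct computation gives $\E[D]\approx \frac{k(b-1)}{\dims}\bigl(e^{-C}-e^{-bC}\bigr)$, exponentially small in $C$ compared with $r\normtwo{\p-\uniform_\dims}^2=\Theta\bigl(\frac{Ck(b-1)^2}{\dims}\bigr)$. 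The conclusion you actually need ($D\gtrsim \dst^2 r/\dims$ w.h.p.) may still hold there, but it requires a different argument — this is precisely the phenomenon that forces the paper to abandon the pure $\chi^2$ proxy, split into light/heavy parts, lower-bound the mean by a Hellinger-type quantity, and show that the rare bad events for the variance coincide with rare good events for the mean. Similarly, ``medium elements only add to $D$, so they can only help'' establishes positivity but not soundness in the case where the light part is conditionally uniform and \emph{all} the distance sits in medium elements: there the medium contribution is the only signal, you must prove concentration for a sum of rare indicators with individual jumps up to $\log r/r\gg \dst^2 r/\dims$, and you must absorb the negative term $-\frac{r}{\dims}\sum_{\text{medium/heavy}}(\p_j-1/\dims)$ created when you restrict the expectation to light elements; your variance cancellation is only argued in the $\norminf{\p}=O(1/\dims)$ regime, so the case analysis carries all the real weight and is missing.

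Two further points need attention. First, your amplitude-estimation budget $t=\Theta\bigl(\dst^{-2}\sqrt{\dims/m}\bigr)$ depends on the realized $m=|R|$, which for far $\p$ (e.g., concentrated on few elements) can be far smaller than $r$, so the claimed query count is not guaranteed; you need a guard such as rejecting when $m$ is noticeably below its value under uniformity, or capping $t$ and proving that any $\p$ yielding such a small $m$ is rejected anyway (this plays the role of the paper's explicit $\max_j X_j$ check, which you gesture at but do not specify — ``estimating the masses'' of heavy elements is not integrated into your accept/reject rule). Second, the high-probability statements in both phases must be combined with explicit constants; the paper's proof shows this bookkeeping is delicate but routine once the case analysis is in place. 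In short: right skeleton and right quantum ingredients, but the crux — the uniform lower bound on the signal against arbitrary far distributions — is asserted rather than proved, and the specific route you propose for it needs repair.
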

The main idea behind this theorem is to combine very careful classical probabilistic analysis with a black-box use of Quantum Mean Estimation (QME)~\cite{Gro98,BHT98,Hei02,Mon15,Ham21,KO23}; see \Cref{sec:ideas} for further discussion.
\Cref{table:1} below compares our result to prior work on the problem.
\begin{table}[b]  
  \begin{tabular}{cccc}
  \toprule
  Reference  & Large $\dst$ regime    & Small $\dst$ regime  &  Access model \\
  \midrule
  \cite{Paninski08,AcharyaDK15} \hspace{-1em} & $\Theta(\dims^{1/2}/\dst^2)$ & & Classical, no source code \\
  \cite{BHH11} & $O(\dims^{1/3})$ for $\dst = \Theta(1)^*$  & & Quantum string oracle \\
  \cite{CFMdW10} & $O(\dims^{1/3}/\dst^2)$ & & Quantum string oracle \\
  \cite{GL20} & & \hspace{-2em}$O(\dims^{1/2}/\dst) \cdot \log(\dims/\dst)^3 \log \log(\dims/\dst)$\hspace{-1em} & Source code  \\
  \cite{LuoWL24} & & $O(\dims^{1/2}/\dst)$ & Source code \\
  \bf{}This work  & $O(\dims^{1/3}/\dst^{4/3})$ for $\dst\geq \frac{1}{\sqrt\dims}$ & & Source code \\
  \bottomrule
  \end{tabular}
  {\footnotesize *The work states a bound of $O(\dims^{1/3}/\dst^{4/3})$, but adds that $\dst$ must be constant.}
  \caption{``Sample'' complexity for uniformity testing with respect to total variation distance \label{table:1}}
\end{table}
\Cref{table:1} has two columns because it seems that different algorithms are necessary depending on how $\dims$ and $\dst$ relate.
(Interestingly, this is not the case in the classical no-source-code model.)
Thus combining our new result with that of~\cite{LuoWL24}, the best known upper bound becomes $O(\min\{\dims^{1/3}/\dst^{4/3}, \dims^{1/2}/\dst\})$.
We remark that although~\cite{LuoWL24}'s algorithm/analysis is already simple, we give an alternative simple algorithm and analysis achieving $O(\dims^{1/2}/\dst)$ in \Cref{sec:small}, employing the classical analysis + QME approach used in the proof of our main theorem.

\paragraph{Lower bounds?} As for lower bounds (holding even in the quantum string oracle model): complexity $\Omega(1/\dst)$ is necessary even in the case of constant $\dims = 2$, following from work of \cite{NW99}; and, \cite{CFMdW10} showed a lower bound of $\Omega(\dims^{1/3})$ even in the case of constant~$\dst$, by  reduction from the collision problem~\cite{AS04}.
For reasons discussed in \Cref{sec:ideas}, we make the (somewhat bold) conjecture that our new upper bound is in fact tight for all $\dims$ and~$\dst$:
\begin{conjecture} \label{conj:tv}
    Any algorithm that distinguishes  $\p = \uniform_\dims$ from $\totalvardist{\p}{\uniform_\dims} > \dst$ with success probability at least~$.99$ requires $\Omega(\min\{\dims^{1/3}/\dst^{4/3}, \dims^{1/2}/\dst\})$ uses of the code for~$\p$. (Moreover, we conjecture this lower bound in the stronger quantum string oracle model.)
\end{conjecture}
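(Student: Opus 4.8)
} We have no proof of \Cref{conj:tv}; here is the approach we consider most promising, together with the step we expect to be the real obstacle. The plan is to establish the two branches of the minimum through two families of hard instances — both built on collision-type problems — and to take the better bound for each pair $(\dims,\dst)$.

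For $\dst \geq \dims^{-1/2}$, where the target is $\Omega(\dims^{1/3}/\dst^{4/3})$, we would use \emph{Paninski instances with large fibers}. Fix a perfect matching on $[\dims]$; in the quantum string oracle, use a base fiber size $k = \Theta(1/\dst)$, and for each matched pair flip an independent fair coin to decide whether its two fibers get sizes $(k+1,k-1)$ or $(k-1,k+1)$, while the uniform instance has every fiber of size exactly $k$. Every such instance is \emph{exactly} $\dst$-far from $\uniform_\dims$ in total variation distance. After symmetrizing over relabelings of $[\dims]$ and of the domain, the only way the two cases differ is in the multiset of fiber sizes — equivalently, in the collision rates of $\p$ — and the cheapest of these to access, the pairwise collision rate $\rho \eqdef \|\p\|_2^2$, equals $(1+\Theta(\dst^2))/\dims$ in the far case versus $1/\dims$ in the uniform case. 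So the problem reduces to estimating $\rho=\Theta(1/\dims)$ to multiplicative precision $\Theta(\dst^2)$ from the code for $\p$, and we would prove an $\Omega\big(\rho^{-1/3}(\dst^2)^{-2/3}\big)=\Omega(\dims^{1/3}/\dst^{4/3})$ query lower bound for that task in the style of the collision lower bound of~\cite{AS04}: the acceptance probability of a $T$-query algorithm is a polynomial of degree at most $2T$ in the oracle's entries, symmetrization turns it into a univariate polynomial of degree at most $2T$ in a parameter interpolating the two fiber-size multisets, this polynomial is bounded on the physically realizable range and has a constant value gap across a window of relative width $\Theta(\dst^2)$ about $\rho=1/\dims$, and a Markov--Bernstein-type inequality then forces $T = \Omega(\dims^{1/3}/\dst^{4/3})$.

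For $\dst \leq \dims^{-1/2}$, where the target is $\Omega(\dims^{1/2}/\dst)$, off-the-shelf results only give $\Omega(1/\dst)$ (from~\cite{NW99}; the $\Omega(\dims^{1/3})$ bound of~\cite{CFMdW10} is weaker in this regime), so a genuine $\sqrt\dims$-factor gain is needed. Here we would seek an instance that hides, in one of $\Theta(\dims)$ unknown locations, a perturbation whose magnitude is resolvable only with $\Omega(1/\dst)$ uses of the code — a planted NW99-style two-outcome gap — so that locating it costs $\Omega(\sqrt\dims)$ by a Grover-type argument and the two subproblems compose \emph{multiplicatively}, via the composition property of the (negative-weight) adversary bound, to $\Omega(\sqrt\dims/\dst)$. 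The care is in making the planted instance exactly $\dst$-far and checking that no coarser statistic (in particular no cheaper collision count) breaks it. As a consistency check — and possibly a way to handle both branches with one analysis producing the $\min$ directly — one should verify that the polynomial-method analysis of the large-fiber Paninski family above degrades to $\Omega(\sqrt\dims/\dst)$ once $\dst < \dims^{-1/2}$; it must, since~\cite{LuoWL24} distinguishes that family from $\uniform_\dims$ in $O(\sqrt\dims/\dst)$ uses of the code.

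The step we expect to be the real obstacle is the polynomial-method (equivalently, adversary) lower bound of the second paragraph, and in particular extracting the exponent $4/3$ rather than $2$: the classical collision lower bound exploits a \emph{constant-factor} gap in collision statistics, whereas here the gap is a minuscule relative-$\Theta(\dst^2)$ perturbation of a rate that is itself $\Theta(1/\dims)$, so one needs a refined polynomial-approximation lower bound tracking both the range size $\dims$ and the gap $\dst^2$ — and it is exactly the interplay among the number of collision pairs $\binom{T}{2}$, the rate $1/\dims$, and the precision $\dst^2$ that should produce the $4/3$. A secondary, more conceptual, risk is that the large-fiber Paninski family is simply not worst-case in the small-$\dst$ regime, so that a structurally different construction (a single heavy element, or a random sparse support) is needed there. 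A last, purely routine point is the non-integrality of $k = \Theta(1/\dst)$, handled by restricting to $\dst$ for which $1/(2\dst)$ is an integer and using monotonicity of the complexity in $\dst$.
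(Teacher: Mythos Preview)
This statement is \Cref{conj:tv}, which the paper explicitly leaves \emph{open}: there is no proof in the paper to compare against. The paper's only support for the conjecture is heuristic (the algorithm of \Cref{sec:large} appears to be the natural one, and the analysis in \Cref{sec:ideas} shows that $O(\dims^{1/3}/\dst^{4/3})$ is the best that particular algorithm can achieve), together with the known endpoint lower bounds $\Omega(\dims^{1/3})$ for constant~$\dst$ and $\Omega(1/\dst)$ for constant~$\dims$, and the tight $\chi^2$-divergence result of \Cref{sec:giant}. Your write-up is likewise not a proof and says so in its first sentence; so there is no proof-versus-proof comparison to make here.

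As a research outline your plan is reasonable and you have correctly put your finger on the hard step. The large-fiber Paninski construction is the natural hard instance, and your computation of the $\Theta(\dst^2)$ relative gap in $\|\p\|_2^2$ is correct. However, the sentence ``a Markov--Bernstein-type inequality then forces $T = \Omega(\dims^{1/3}/\dst^{4/3})$'' is doing all the work and is precisely the unproven part. The Aaronson--Shi/Kutin polynomial method handles a \emph{constant-factor} change in fiber size (1-to-1 versus $r$-to-1); in your instance both cases are $k$-to-1 up to a $\pm 1$ wiggle, and after symmetrization the resulting univariate polynomial is constrained on an integer grid of length $\Theta(\dims k) = \Theta(\dims/\dst)$ while needing to resolve a gap at two adjacent grid points. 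It is not at all clear that the standard argument yields the exponent $4/3$ rather than, say, $(\dims/\dst)^{1/3}$ or $\dims^{1/3}/\dst$; extracting the extra $\dst^{-1/3}$ would require a genuinely new polynomial-approximation inequality that simultaneously tracks the grid length and the tiny gap. For the small-$\dst$ branch, the composition-of-adversary-bounds idea is plausible, but note that known adversary composition theorems apply to function composition with disjoint inputs, whereas here the ``inner'' $\Omega(1/\dst)$ instance and the ``outer'' search share the same oracle, so you would need either a direct adversary-matrix construction or a careful embedding that makes the subproblems genuinely independent. None of this is a mistake on your part---you flag exactly these points as obstacles---but it should be stated plainly that the proposal, like the paper, leaves \Cref{conj:tv} open.
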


\paragraph{Identity testing.} 
Several prior works in this area have also studied the following natural generalization of uniformity testing: testing identity of the unknown distribution~$\p$ to a known hypothesis distribution~$\q$.
An example application of this might be when $\q$ is a Porter--Thomas-type distribution arising as the ideal output of a random quantum circuit. 
Luckily, fairly recent work has given a completely generic reduction from \emph{any} fixed identity testing problem to the uniformity testing problem; see~\cite{Goldreich:16}, or~\cite[Section~2.2.3]{CanonneTopicsDT2022}. We can therefore immediately extend our new theorem to the general identity-testing setting:
\begin{corollary}
  There is a computationally efficient quantum algorithm for identity testing to a reference distribution $\q$ over $[\dims]$ with the following guarantees: The algorithm makes $O(\min(\dims^{1/3}/\dst^{4/3}, \dims^{1/2}/\dst))$ uses of  ``the code'' for an unknown distribution $\p$ over $[\dims]$, and distinguishes with probability at least $.99$ between
\begin{equation}
    (1)~\p=\q, \qquad \text{and} \qquad (2)~\totalvardist{\p}{\q} > \dst.    
\end{equation}
\end{corollary}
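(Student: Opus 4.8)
The plan is to reduce identity testing to uniformity testing and invoke \Cref{thm:main} (for $\dst \geq 1/\sqrt\dims$) together with the small-$\dst$ algorithm of \Cref{sec:small} (for smaller $\dst$), via the generic classical reduction of Goldreich~\cite{Goldreich:16} (see also~\cite[Section~2.2.3]{CanonneTopicsDT2022}); the only genuinely new point to verify is that this reduction lifts to the source-code access model. Recall the shape of the reduction. Given the known hypothesis $\q$ on $[\dims]$, one fixes a new domain $[\dims']$ with $\dims' = \Theta(\dims)$, partitioned into blocks $B_1,\dots,B_\dims$ with $|B_i| \approx \dims'\q_i$ (plus a small ``overflow'' region absorbing the rounding), together with a randomized map $F_\q\colon[\dims]\to[\dims']$ that, on input $i$, outputs a uniformly random element of $B_i$ with an appropriate probability and otherwise a uniformly random element of the overflow region. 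By design $F_\q(\q) = \uniform_{\dims'}$ exactly, and Goldreich's analysis shows $\totalvardist{F_\q(\p)}{\uniform_{\dims'}} = \Omega(\totalvardist{\p}{\q})$ for every $\p$; crucially, $F_\q$ depends only on $\q$ and, given $\q$, is computable by a $\operatorname{poly}(\log\dims)$-size classical randomized circuit.

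Now the crux: applying $F_\q$ is merely a classical post-processing step on a draw from $\p$, and such steps compose with ``source code.'' Concretely, given a unitary circuit $C$ implementing $\p$ in the purified-query access model of \Cref{sec:prelims}, let $C'$ be the circuit that runs $C$, then applies a standard reversible implementation of $F_\q$ (using a fresh ancilla register to supply its internal coin flips and another for uncomputation), and designates the image register as its output. Then $C'$ is a unitary circuit implementing $\p' \eqdef F_\q(\p)$, it is efficiently constructible from $C$ and $\q$, and each forwards or backwards invocation of $C'$ costs exactly one forwards or backwards invocation of $C$ plus $\operatorname{poly}(\log\dims)$ elementary gates. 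Thus ``source code'' for $\p$ yields, at unit overhead, ``source code'' for $\p'$.

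It remains to run a uniformity tester on $C'$ with dimension $\dims'$ and distance parameter $\dst' = \Theta(\dst)$, which distinguishes $\p = \q$ (equivalently $\p' = \uniform_{\dims'}$) from $\totalvardist{\p}{\q} > \dst$ (equivalently $\totalvardist{\p'}{\uniform_{\dims'}} > \dst'$), proving the corollary. For $\dst = \Omega(1/\sqrt\dims)$ we have $\dst' \geq 1/\sqrt{\dims'}$, so \Cref{thm:main} applies and uses $O((\dims')^{1/3}/(\dst')^{4/3}) = O(\dims^{1/3}/\dst^{4/3})$ invocations of $C'$, hence of $C$; for the complementary range $\dst = O(1/\sqrt\dims)$ the algorithm of \Cref{sec:small} (equivalently~\cite{LuoWL24}) uses $O((\dims')^{1/2}/\dst') = O(\dims^{1/2}/\dst)$, with no restriction on $\dst$. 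Taking whichever bound is smaller gives the claimed $O(\min(\dims^{1/3}/\dst^{4/3},\ \dims^{1/2}/\dst))$, and the $.99$ success probability and computational efficiency are inherited directly from the uniformity tester. The one subtlety worth flagging is that the argument relies on the reduction producing \emph{exactly} the uniform distribution in the completeness case, so that the non-tolerant guarantee of \Cref{thm:main} suffices; Goldreich's construction is designed to have this property, but a reduction with only approximate completeness would instead force us to supply a tolerant uniformity tester.
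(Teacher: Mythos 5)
Your proposal is correct and follows essentially the same route as the paper: invoke Goldreich's identity-to-uniformity reduction, observe that it is purely classical post-processing depending only on $\q$ and therefore composes with the circuit for $\p$ to yield ``the code'' for the transformed distribution at unit overhead, and then run the uniformity testers of \Cref{thm:main} and \Cref{sec:small} in the respective regimes of $\dst$. The paper's appendix just spells out the same verification mapping-by-mapping (mixing with uniform, rounding with an overflow element, and splitting into blocks), which your single-map description $F_\q$ packages together without changing the argument.
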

\noindent(For completeness, we verify in~\cref{app:identity:reduction} that the blackbox reduction does indeed carry through in our setting, preserving access to ``the code''.)

\paragraph{More fine-grained results.}
In proving our main theorem, we will in fact prove a strictly stronger version, one which is more fine-grained in two ways: 

(1)~\emph{Tolerance:} Not only does our test accept with high probability when $\p = \uniform_\dims$, it also accepts with high probability when $\p$ is sufficiently close to $\uniform_\dims$.  

(2)~\emph{Stricter distance measure.} Not only does our test reject with high probability when $\totalvardist{\p}{\uniform_\dims} > \dst$, it also rejects with high probability when $\hellinger{\p}{\uniform_\dims} > \dst$.
(This is stronger, since $\totalvardist{\p}{\q} \leq \hellinger{\p}{\q}$ always.)

To elaborate, recall the below chain of inequalities, which also includes KL- and $\chi^2$-divergence.  
(We review probability distance measures in \Cref{sec:prelims}.)
\begin{equation}  \label{eqn:ineqs}
  \totalvardist{\p}{\q}^2 \leq \hellingersq{\p}{\q} \leq \kldiv{\p}{\q} \leq \chisquare{\p}{\q}.
\end{equation}
The strictly stronger version of \Cref{thm:main} that we prove is:
\begin{theorem}\label{thm:main2}
  There is a computationally efficient quantum algorithm for uniformity testing with the following guarantees: For $1/\dims \leq \thresh \leq 1$, the algorithm makes $O(\dims^{1/3}/\thresh^{2/3})$ uses of  ``the code'' for an unknown distribution $\p$ over $[\dims]$, and distinguishes with probability at least~$.99$ between 
  \begin{equation}
      (1)~\chisquare{\p}{\uniform_\dims} \leq .99\thresh ~~\text{and}~~ \norminf{\p} \leq 100/\dims, 
      \qquad \textrm{and} \qquad (2)~\hellingersq{\p}{\uniform_\dims} > \thresh.    
  \end{equation}
\end{theorem}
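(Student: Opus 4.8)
The plan is to realize the ``careful classical analysis $+$ black-box Quantum Mean Estimation (QME)'' strategy announced in the introduction around a ``collision with a table'' statistic, in the spirit of~\cite{KO23}. Fix $r := \Theta(\dims^{1/3}/\thresh^{2/3})$; the hypothesis $\thresh \ge 1/\dims$ guarantees $1 \le r = O(\dims)$. The algorithm first draws $r$ i.i.d.\ samples $X_1,\dots,X_r \sim \p$ (costing $r$ uses of the code) and records the set $S = \{X_1,\dots,X_r\} \subseteq [\dims]$ classically; it then works with the Bernoulli variable $Z := \mathbf{1}[Y \in S]$ for a fresh draw $Y \sim \p$. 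Since $S$ is now fixed and classical, one sample of $Z$ costs exactly one use of the code (to produce $Y$) plus an evaluation of the code-free membership circuit for $S$. The gain from folding the table in is that $\E[Z \mid S] = \p(S) := \sum_{i \in S}\p_i$ is of order $r/\dims$ rather than $1/\dims$: since QME estimates the mean of a $[0,1]$-valued variable of variance $\sigma^2$ to additive error $\varepsilon$ using $O(\sigma/\varepsilon)$ samples, and here $\sigma^2 \le \p(S)$, estimating $\p(S)$ to additive error $\varepsilon$ costs $O(\sqrt{\p(S)}\,/\varepsilon) = O(\sqrt{r/\dims}\,/\varepsilon)$ uses of the code whenever $\p(S) = O(r/\dims)$ (the close-to-uniform regime). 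We will want $\varepsilon = \Theta(r\thresh/\dims)$, for a QME cost of $O(\sqrt{\dims}/(\sqrt{r}\,\thresh))$; with the $r$ table samples this totals $O\big(r + \sqrt{\dims}/(\sqrt{r}\,\thresh)\big) = O(\dims^{1/3}/\thresh^{2/3})$ at the chosen $r$. This ``pay $r$ to build the table, then reap an $r$-fold amplification of the collision probability'' balance, minimized exactly at $r \asymp \dims^{1/3}/\thresh^{2/3}$, is the same one responsible for the cube-root speedup in the collision problem, here extended to all $\thresh$. The test accepts iff the QME estimate of $\p(S)$, minus the exactly-known baseline $|S|/\dims$, lies below a threshold $\tau = \Theta(r\thresh/\dims)$.

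The analysis revolves around $\mathrm{exc}(S) := \p(S) - |S|/\dims = \sum_{i \in S}(\p_i - 1/\dims)$, which vanishes identically when $\p = \uniform_\dims$: subtracting the exactly-known $|S|/\dims$ strips away the $\Theta(r/\dims)$-scale part of $\p(S)$ and leaves a signal of size $\Theta(r\thresh/\dims)$, matched to the precision $\varepsilon$. From $\Pr[i \in S] = 1-(1-\p_i)^r$ one gets $\E_S[\mathrm{exc}(S)] = \sum_i (\tfrac1\dims - \p_i)\big((1-\p_i)^r - (1-\tfrac1\dims)^r\big)$, a sum of nonnegative terms (the two factors always share a sign), each at most $r(\p_i - \tfrac1\dims)^2$ by the mean value theorem; hence $\E_S[\mathrm{exc}(S)] \le r\,\chisquare{\p}{\uniform_\dims}/\dims \le 0.99\,r\thresh/\dims$ in case~(1). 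For the reverse bound in case~(2), a per-coordinate estimate of $|(1-\p_i)^r - (1-\tfrac1\dims)^r|$ — splitting on whether $\p_i$ is $\ll 1/r$, $\asymp 1/r$, or $\gg 1/r$, and using $r = O(\dims)$ — lower bounds each term by a constant multiple of $\tfrac r\dims(\sqrt{\p_i} - \tfrac1{\sqrt\dims})^2$, so that $\E_S[\mathrm{exc}(S)] \ge c\cdot\tfrac r\dims\,\hellingersq{\p}{\uniform_\dims} > c\,r\thresh/\dims$ for an absolute constant $c > 1$. Thus $\E_S[\mathrm{exc}(S)]$ is controlled from above by $\chisquare{\p}{\uniform_\dims}$ and from below by $\hellingersq{\p}{\uniform_\dims}$, exactly matching the ceiling of case~(1) and the floor of case~(2) (cf.\ the ordering~\eqref{eqn:ineqs}), so the two regimes are separated by a constant factor, with the threshold $\tau$ placed in the gap.

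What is left is concentration, and this is where the main difficulty lies. In case~(1), $\operatorname{Var}_S[\mathrm{exc}(S)] \le \sum_i(\p_i - \tfrac1\dims)^2(1-(1-\p_i)^r) \le r\,\norminf{\p}\sum_i(\p_i-\tfrac1\dims)^2 = r\,\norminf{\p}\,\chisquare{\p}{\uniform_\dims}/\dims = O(r\thresh/\dims^2)$, invoking $\norminf{\p}\le 100/\dims$ and $\chisquare{\p}{\uniform_\dims}\le 0.99\thresh$; Chebyshev then puts $\mathrm{exc}(S)$ within $o(r\thresh/\dims)$ of its mean provided $r\thresh \gg 1$, which holds since $r\thresh = \Theta((\dims\thresh)^{1/3})$, and adding the QME error $\varepsilon = \Theta(r\thresh/\dims)$ with a small enough constant still leaves the estimated excess below $\tau$. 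Case~(2) is the obstacle, precisely because the $\ell_\infty$ promise is then unavailable and heavy coordinates must be shown to help rather than hurt: the point is that any coordinate of mass $\gtrsim \thresh$ lies in $S$ with constant probability (since $r\thresh \gtrsim 1$) and on its own pushes $\mathrm{exc}(S)$ far above $\tau$, while the remaining — necessarily light — coordinates keep the variance $O(r\thresh/\dims^2)$ as in case~(1); boosting the test with a constant number of independent tables (or repetitions) then yields success probability $\ge .99$. Two technicalities close the argument: the QME subroutine is run with a budget matching the target cost, and non-convergence — which can only be caused by $\p(S) \gg r/\dims$, i.e.\ by heavy coordinates — is reported as ``reject''; and the pinched regime $\thresh = \Theta(1/\dims)$, where $r = \Theta(\dims)$ and a single table all but covers the support, is handled separately (there $\dims^{1/3}/\thresh^{2/3} = \Theta(\dims)$, so $\p$ can simply be estimated essentially exactly). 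Computational efficiency is immediate, since outside the QME call the algorithm only samples, hashes, and does $\mathrm{poly}(\log\dims)$-bit arithmetic, and the membership circuit for $S$ has size $\mathrm{poly}(\log\dims, r)$.
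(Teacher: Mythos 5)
Your route is genuinely different from the paper's (which runs QME on the count-based statistic $\bY_j=\tfrac{\dims}{\ns}\bX_j-1$, whose mean is an \emph{unbiased} estimator of $\chisquare{\p}{\uniform_\dims}$), and the table-indicator statistic $\mathrm{exc}(S)=\p(S)-|S|/\dims$ has real advantages (notably $\sigma^2\le\p(S)\le r/\dims+\mathrm{exc}(S)$, which ties the QME error to the signal). But two steps fail as written. First, the case-(2) mean bound ``$\E_S[\mathrm{exc}(S)]\ge c\cdot\tfrac{r}{\dims}\hellingersq{\p}{\uniform_\dims}$ for an absolute constant $c>1$'' is false in the regime $\thresh=\Theta(1/\dims)$: your statistic is biased, and for a light coordinate $\p_i=(1+\eps_i)/\dims$ the per-term contribution is $\approx(1-1/\dims)^r\,r\eps_i^2/\dims^2$, i.e.\ attenuated by $e^{-r/\dims}=e^{-\Theta(c_0(\dims\thresh)^{-2/3})}$ where $c_0$ is the constant in $r=c_0\dims^{1/3}/\thresh^{2/3}$. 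Your Chebyshev step forces $c_0$ to be a large constant ($r\thresh\gg 1$ with ample margin), so at $\dims\thresh=O(1)$ the attenuation $e^{-c_0}$ destroys the claimed gap between the case-(1) ceiling $0.99\,r\thresh/\dims$ and the case-(2) floor; and the proposed patch for this pinched regime --- ``estimate $\p$ essentially exactly'' with $\Theta(\dims)$ uses of the code --- does not work, since with $O(\dims)$ uses one can only learn $\p$ to constant total variation error, far too coarse to resolve $\chisquare{\p}{\uniform_\dims}\le .99\thresh$ versus $\hellingersq{\p}{\uniform_\dims}>\thresh$ at $\thresh\asymp 1/\dims$ (this corner is exactly as hard as the problem itself, and the paper's unbiased statistic is what lets it cover all $\thresh\ge 1/\dims$ uniformly).

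Second, your case-(2) concentration dichotomy --- ``any coordinate of mass $\gtrsim\thresh$ lies in $S$ with constant probability and alone pushes $\mathrm{exc}(S)$ past $\tau$, while the remaining coordinates are necessarily light and keep the variance $O(r\thresh/\dims^2)$'' --- is wrong for moderately heavy coordinates. A coordinate with, say, $\p_i=\thresh/2$ (take $\thresh=\Theta(1)$) lands in $S$ only with probability $\approx r\p_i$, which can be $o(1)$, yet it contributes $\Theta(\thresh^2\min\{1,r\thresh\})=\Theta(\thresh^2)$ to $\operatorname{Var}_S[\mathrm{exc}(S)]$, vastly exceeding $r\thresh/\dims^2$; so the ``as in case (1)'' variance bound fails, and Chebyshev on the whole statistic breaks. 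This is precisely the ``rare heavy elements'' obstruction the paper isolates, and fixing it requires the analogue of its split into light/heavy Hellinger mass ($\eta_1,\eta_2$) and the two-subcase argument (heavy part nonnegative and detected via Binomial concentration when $\eta_1$ dominates; light part controlled by Chebyshev when $\eta_2$ dominates). Your statistic is well suited to such a fix --- heavy coordinates only ever \emph{increase} $\mathrm{exc}(S)$ --- but the fix is not in your writeup. Relatedly, QME has no ``non-convergence'' flag to report as \reject; you should instead argue directly that when $\p(S)\gg r/\dims$ the inflation comes through $\mathrm{exc}(S)$ itself, so the $\sqrt{\p(S)}$-scaled QME error stays below the signal. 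With these repairs your argument would cover $\thresh\ge C/\dims$ for some constant $C$, but not the full range claimed in the theorem.
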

We remark that most prior works on uniformity testing~\cite{BHH11,CFMdW10,GL20,LuoWL24} also had some additional such fine-grained aspects, beyond what is stated in \Cref{table:1}.

\paragraph{Additional results.}  Speaking of $\chi^2$-divergence, we mention two additional results we prove at the end of our work. These results additionally inform our \Cref{conj:tv}.

First, as mentioned earlier, in \Cref{sec:small} we give an alternative proof of the $O(\dims^{1/2}/\dst)$ upper bound of~\cite{LuoWL24}, and---like in that work---our result is tolerant with respect to $\chi^2$-divergence.  
That is, we prove the strictly stronger result that for $\thresh \leq 1/\dims$, one can use the code $O(\dims^{1/2}/\thresh^{1/2})$ times to distinguish $\chisquare{\p}{\uniform_\dims} \leq c \thresh$ from $\chisquare{\p}{\uniform_\dims} > \thresh$ (for some constant $c > 0$).

Second, recall that $\chisquare{\p}{\uniform_\dims}$ can be as large as~$\dims$.  
For example, $\chisquare{\uniform_{\cyan{S} }}{\uniform_\dims} = \frac{\dims}{\cyan{r} } - 1$ for any set $\cyan{S} \subseteq [\dims]$ of size $\cyan{r}$. Thus it makes sense to consider the uniformity testing problem even with respect to a $\chi^2$-divergence threshold $\thresh$ that exceeds~$1$.  
In \Cref{sec:giant} we show (albeit only in the quantum string oracle model) that for $\thresh \geq 1$, one can use the code $O(\dims^{1/3}/\thresh^{1/3})$ times to distinguish $\chisquare{\p}{\uniform_\dims} \leq c \thresh$ from $\chisquare{\p}{\uniform_\dims} > \thresh$, and this is optimal.

\section{Technical overview of our proof}\label{sec:ideas}
Our main algorithm is concerned with achieving the best possible $\dst$-dependence for uniformity testing while maintaining a $\dims$-dependence of~$\dims^{1/3}$; in this way, it is best compared with the older works of~\cite{BHH11,CFMdW10}, the latter of which achieves complexity~$O(\dims^{1/3}/\dst^2)$, as well as the classical (no-source-code) algorithm achieving complexity $O(\dims^{1/2}/\dst^2)$.
In fact, all four algorithms here are almost the same (except in terms of the number of samples they use).
Let us describe our viewpoint on this common methodology. 

\noindent We consider the algorithm as being divided into two Phases, and we may as well assume each Phase uses $\ns$ samples.
Phase~1 will have two properties:
\begin{itemize}
 \item It will make $\ns$ \emph{black-box} draws from~$\p$ (\ie the source code is not used in Phase~1).

 \item Using these draws, Phase~1 will end by constructing a certain ``random variable''---in the technical sense of a function $Y : [\dims] \to \R$.

  \item The mean of this random variable~$Y$, vis-a-vis the unknown distribution~$\p$, will ideally be close to $\chisquare{\p}{\uniform} = \dims \cdot \normtwo{\p - \uniform_\dims}^2$. 
  That is, ideally $\mu \coloneqq \E_{\p}[Y] = \sum_{j=1}^\dims \p_j Y(j) \approx \chisquare{\p}{\uniform_\dims}$.
\end{itemize}
\noindent Phase~2 then performs a \emph{mean estimation} algorithm on $Y$ (vis-a-vis~$\p$) to get an estimate of~$\mu$ and therefore of~$\chisquare{\p}{\uniform_\dims}$.  
Ideally, the resulting overall algorithm is not just a uniformity tester, but a $\chi^2$-divergence-from-uniformity \emph{estimator}.  
This could then be weakened to a TV-distance uniformity tester using the inequality $\totalvardist{\p}{\uniform_\dims}^2 \leq \chisquare{\p}{\uniform_\dims}$.

The mean estimation algorithm used in Phase~2 differs depending on whether one has the source code or not.  
In the classical (no source code) model, one simply uses the naive mean estimation algorithm based on $\ns$ more black-box samples; by Chebyshev's inequality, this will (with high probability) give an estimate of~$\mu$ to within $\pm O(\sigma/\ns^{1/2})$, where $\sigma \coloneqq \stddev_\p[Y] = \sqrt{\sum_{j=1}^\dims (Y(j) - \mu)^2}$.
In the case of a quantum tester with the source code access, we can use a \emph{Quantum Mean Estimation} (QME) routine; in particular, the one from~\cite{KO23} will (with high probability) yield an estimate of~$\mu$ to within $\pm O(\sigma/\ns)$.\footnote{This QME routine was not available at the time of~\cite{BHH11,CFMdW10} which had to make do with Quantum Approximate Counting~\cite{BHT98}---essentially, QME for Bernoulli random variables.  But this is not the source of our improvement; one can obtain our main theorem with only a $(\polylog \dims)$-factor loss using just Quantum Approximate Counting.} 

A subtle aspect of this overall plan is that the mean~$\mu$ and standard deviation~$\sigma$ of~$Y$ \emph{are themselves random variables} (in the usual sense), where the randomness comes from Phase~1.  
Thus it is natural to analyze $\E_{\text{Phase~1}}[\mu]$ and $\E_{\text{Phase~1}}[\sigma]$.
Of course, these depend on the definition of~$Y$, which we now reveal:  $Y(j) = \frac{\dims}{\ns}X_j - 1$, where $X_j$ denotes the number of times $j \in [\dims]$ was drawn in Phase~1.
The point of this definition of $Y$ is that a short calculation implies 
\begin{equation}
  \E_{\text{Phase~1}}[\mu] = \chisquare{\p}{\uniform_\dims};
\end{equation}
that is, the random variable $\mu$ is an \emph{unbiased estimator} for our quantity of interest, the $\chi^2$-divergence of~$\p$ from~$\uniform_\dims$.
This is excellent, because although the algorithm does not see~$\mu$ at the end of Phase~1, it will likely get a good estimate of it at the end of Phase~2\dots so long as (the random variable) $\sigma$ is small. 

We therefore finally have two sources of uncertainty about our final error (in estimating $\chisquare{\p}{\uniform_\dims}$):
\begin{enumerate}
  \item Although $\E_{\text{Phase~1}}[\mu] = \chisquare{\p}{\uniform_\dims}$, the random variable $\mu$ may have fluctuated around its expectation at the end of Phase~1.  One way to control this would be to bound $\var_{\text{Phase~1}}[\mu]$ (and then use Chebyshev).
  \item The Phase~2 mean estimation incurs an error proportional to~$\sigma$. One way to control this would be to bound $\E_{\text{Phase~1}}[\sigma^2]$ (and then use Markov to get a high-probability bound on~$\sigma^2$, and hence~$\sigma$).
\end{enumerate}
The quantities controlling the error here, $\var_{\text{Phase~1}}[\mu]$ and $\E_{\text{Phase~1}}[\sigma^2]$, are explicitly calculable symmetric polynomials in $\p_1, \dots, \p_\dims$ of degree at most~$4$, depending on~$\ns$.
In principle, then, one can relate these quantities to $\chisquare{\p}{\uniform_\dims} = \dims \cdot \normtwo{\p-\uniform_\dims}^2$ itself, and derive a bound on how big~$\ns$ must be to (with high probability) get a good estimate of $\chisquare{\p}{\uniform_\dims}$. 

In the classical (no source code) case, this methodology is a way to obtain the $O(\dims^{1/2}/\dst^2)$ sample complexity, adding to the number of existing classical sample-optimal algorithms for the task.
(This method in particular has some potential useful applications; \eg  one could consider decoupling the number of samples used in Phases~1 and~2 to, \eg obtain tradeoffs for memory-limited settings).
On one hand, with this method one can give a very compressed proof of the $O(\dims^{1/2}/\dst^2)$ that, factoring out routine calculations, 
fits in half a page (see, \eg~\cite[Sec.~10]{OW23}). 
On the other hand, one has to execute the calculations and estimations with great care, lest one would obtain a suboptimal result (there is a reason it took $8$~years\footnote{Technically, it took more than $8$ years, as the proof of~\cite{Paninski08} was later shown to have a flaw: so the tight dependence had to wait until~\cite{AcharyaDK15}. See~\cite[Section~2.3]{CanonneTopicsDT2022} for a discussion.} to get the optimal quadratic dependence on~$\dst$~\cite{GRexp:00,Paninski08}). 

In the case when source code is available, so that one can use the QME algorithm, how well does this methodology fare?  
On one hand, QME gives a quadratic improvement over naive classical mean estimation, meaning one can try to use signficantly fewer samples in Phase~2.  
But when one balances out the sample complexity between the two Phases, it implies one is using fewer samples in Phase~1, and hence one gets worse concentration of~$\mu$ around its mean in Phase~1. 
So the calcuations become more delicate.

\subsection{Heuristic calculations}
Instead of diving into complex calculations, let's look at some heuristics.
First, let's consider how the algorithm proceeds in the case when $\p$ really is the uniform distribution $\uniform_\dims$.
In this case, as long as we're in a scenario where $\ns \ll \dims^{1/2}$, we will likely get all distinct elements in Phase~1, meaning that $X_j$ will be~$1$ for exactly $\ns$ values of~$j$ and $X_j$ will be~$0$ otherwise.
Then $Y(j)$ will be $\frac{\dims}{\ns}-1$ for $\ns$ values of~$j$ and will be~$-1$ otherwise.
This indeed means $\mu = \E_{\p}[Y] = \frac{1}{\dims} \sum_{j=1}^\dims Y(j) = 0 = \normtwo{\p - \uniform_\dims}$ with \emph{certainty} in Phase~1.  
This is very good; we get no error out of Phase~1.
However QME in Phase~2 will not perfectly return the value~$\mu =0$; rather, it will return something in the range $\pm O(\sigma/\ns)$, where $\sigma = \sqrt{\frac{1}{\dims}\sum_{j=1}^\dims (Y(j)-0)^2} = \sqrt{\frac{\dims}{\ns} - 1} \sim \dims^{1/2}/\ns^{1/2}$.
Thus the value returned by QME may well be around $\dims^{1/2}/\ns^{3/2}$, which from the algorithm's point of view is consistent with $\chisquare{\p}{\uniform_\dims} \approx \dims^{1/2}/\ns^{3/2}$.
Thus the algorithm will only become confident that $\totalvardist{\p}{\uniform_\dims}^2 \lessapprox \dims^{1/2}/\ns^{3/2}$, and hence it can only confidently accept in the case $\p = \uniform_\dims$ provided $\dims^{1/2}/\ns^{3/2} \lessapprox \dst^2$; \ie $\ns \gtrapprox \dims^{1/3}/\dst^{4/3}$.
We thereby see that with this algorithm, a uniformity testing upper bound of $O(\dims^{1/3}/\dst^{4/3})$ is the \emph{best} we can hope for.  
If one also believes that this algorithm might be optimal (and it \emph{has} been the method of choice for essentially all previously known results), then this could possibly be taken as evidence for our \Cref{conj:tv}. 

At this point, one might try to prove that complexity $O(\dims^{1/3}/\dst^{4/3})$ \emph{is} achievable; so far we have only argued that with this many samples, the algorithm will correctly accept when $\p = \uniform_\dims$ (with high probability).
Again, before jumping into calculations, one might try to guess the ``hardest'' kind of $\dst$-far distributions one might face, and try to work out the calculations for these cases.
The hardest distributions in the classical case (\ie the ones that lead to the matching $\Omega(\dims^{1/2}/\dst^2)$ lower bound) are very natural: they are the $\p$'s in which half of the elements $j \in [\dims]$ have $\p_j = \frac{1+2\dst}{\dims}$ and half have $\p_j = \frac{1-2\dst}{\dims}$.  
Assuming this is the ``worst case'', one can calculate what $\var_{\text{Phase~1}}[\mu]$ and $\E_{\text{Phase~1}}[\sigma^2]$ will be, and the calculations turn out just as desired.  
That is, with $\ns = O(\dims^{1/3}/\dst^{4/3})$, these two error quantities can be shown to be suffciently small so that the overall algorthm will correctly become confident that $\chisquare{\p}{\uniform_\dims} = \dims \cdot \normtwo{\p - \uniform_\dims}^2  \leq 4\dims \cdot \totalvardist{\p}{\uniform_\dims}^2$ significantly exceeds~$\dst^2$, and hence the algorithm can correctly reject. 

Everything therefore looks good, but there is a fly in the ointment.  
Even though this particular~$\p$ with its values of $\frac{1 \pm 2\dst}{\dims}$ seems like the ``hardest'' distribution to face, one still has to reason about all possible~$\p$'s with $\totalvardist{\p}{\uniform_\dims}$.
And when one does the calculations of $\var[\mu]$ and $\E[\sigma^2]$ as prescribed by the standard methodology, the plan ends up \emph{failing}.
Specifically one gets too much error for $\p$'s that have somewhat ``heavy'' elements, meaning $\p_j$'s with $\p_j \gg 1/\dims$.   The prior works~\cite{BHH11,CFMdW10} cope with this failure by taking more samples; \ie setting $\ns = O(\dims^{1/3}/\dst^{c})$ for $c > 4/3$ (specifically, \cite{CFMdW10} achieves $c=2$).  But our goal is to show that this is unnecessary---that the algorithm itself works, even though the standard and natural way of analyzing it fails. 

In short, the reason the standard analysis of the algorithm fails is due to ``rare events'' that are caused by heavy elements in~$\p$. 
These $j$'s with $\p_j \gg 1/\dims$ may well still have $\p_j \ll 1/\ns$ (for our desired $\ns = O(\dims^{1/3}/\dst^{4/3})$), and thus be drawn only rarely in Phase~1.  
The major difficulty is that when they \emph{are} drawn, they generate a \emph{very} large contribution to~$\sigma^2$, causing $\E_{\text{Phase~1}}[\sigma^2]$ to be ``misleadingly large''.  
That is, when there are heavy elements, $\sigma^2$ may have the property of typically being much smaller than its expectation.
Thus controlling the QME error using the \emph{expected} value of~$\sigma^2$ is a bad strategy. 

Perhaps the key insight in our analysis is to show: \emph{In those rare Phase~1 outcomes when $\sigma^2$ is unusually large, $\mu$ is \emph{also} unusually large compared to its expectation.}  The latter event is helpful, because if $\mu$ ends up much bigger than its expectation, we can tolerate a correspondingly worse error-bar from QME.  In short, we show that the rare \emph{bad} outcomes for $\sigma^2$ coincide with the rare \emph{good} outcomes for $\mu$. 

In order to make this idea work out quantitatively, we (seem to) need to weaken our ambitions and get something a bit worse than a $\chi^2$-divergence-from-uniform estimation algorithm, in two ways. (This is fine, as our main goal is just a non-tolerant uniformity tester with respect to TV.)  
First, rather than insisting that we accept with high probability when $\chisquare{\p}{\uniform_\dims} \leq .99 \thresh$ and reject with high probability when $\chisquare{\p}{\uniform_\dims} > \thresh$, we need to only require rejection when $\hellingersq{\p}{\uniform_\dims} > \thresh$.
The reason is that the rare large values of $\sigma^2$ that we face are only comparable with the larger value $\hellingersq{\p}{\uniform_\dims}$, and not with $\chisquare{\p}{\uniform_\dims}$.\footnote{We remark that this $\chi^2$-versus-Hellinger-squared dichotomy is quite reminsicent of the one that occurs in classical works on identity testing, such as~\cite{AcharyaDK15}.} 

As for the second weakening we need to make:  We explicitly add to our tester a check that the value of $\max_j\{X_j\}$ arising after Phase~1 is not too large.  Roughly speaking, this extra test ensures that there are no very heavy elements.  %
(Of course, this is satisfied when $\p = \uniform_\dims$, so we don't mind adding this test.)
The reason we need to add this check is so that we can bound the quadratic expression $\sum_{j = 1}^\dims X_j^2$ (which enters into the value of~$\sigma^2$) by $\max_j\{X_j\} \cdot \sum_{j=1}^\dims X_j$; in turn, once $\max_j\{X_j\}$ is checked to be small, this expression can be bounded by the linear quantity $\sum_{j=1}^\dims X_j$, which can be related to~$\mu$.
It is by relating $\sigma^2$ to $\mu$ in this way that we are able to show the correlation between rare events --- that when $\sigma^2$ is big, $\mu$ is also big. 

To conclude, we apologize to the reader for writing a ``technical overview'' whose length is nearly comparable to that of the actual proof itself. While we tried to make our argument as streamlined and concise as possible, we felt that it was worth conveying the ideas and detours which led us there, and which, while now hidden, motivated the final proof.

\section{Preliminaries}
  \label{sec:prelims}
\subsection{Probability distances}
Throughout, $\log$ and $\ln$ the binary and natural logarithms, respectively. We identify a probability distribution $\p$ over $[\dims]=\{1,2,\dots,\dims\}$ with its probability mass function (pmf), or, equivalently, a vector $\p\in\R^\dims$ such that $\p_i \geq 0$ for all $i$ and $\sum_{i=1}^\dims \p_i=1$. For a subset $S\subseteq [\dims]$, we accordingly let $\p(S) = \sum_{i\in S}\p_i$. The \emph{total variation distance} between two distributions $\p,\q$ over $[\dims]$ is defined as
\begin{equation}
  \totalvardist{\p}{\q} = \sup_{S\subseteq [\dims]} \{\p(S)-\q(S)\} = \frac{1}{2}\normone{\p-\q} \in [0,1],
\end{equation}
where the last equality is from Scheff\'e's lemma. By Cauchy--Schwarz, this gives us the relation 
\begin{equation}
  \frac{1}{2} \normtwo{\p-\q} \leq \totalvardist{\p}{\q} \leq \frac{\sqrt{\dims}}{2} \normtwo{\p-\q}.
\end{equation}
We will in this paper also consider other notions of distance between probability distributions: the \emph{squared Hellinger distance}, defined as
\begin{equation}
  \hellingersq{\p}{\q} = \sum_{i=1}^\dims \Paren{\sqrt{\p_i}-\sqrt{\q_i}}^2 = \normtwo{\sqrt{\p}-\sqrt{\q}}^2 \in [0,2].
\end{equation}
(Some texts normalize this by a factor of $\frac12$; we do not do so, as it makes our statements cleaner.) 
The \emph{chi-squared divergence} is then defined as
\begin{equation}
  \chisquare{\p}{\q} = \sum_{i=1}^\dims \frac{(\p_i-\q_i)^2}{\q_i} =\parens*{\sum_{i=1}^\dims\frac{\p_i^2}{\q_i}}-1\,,
\end{equation}
while the \emph{Kullback--Leibler divergence} (least relevant to us, but quite common in the literature), in nats, is defined as
\begin{equation}
  \kldiv{\p}{\q} = \sum_{i=1}^\dims \q_i \ln \frac{\q_i}{\p_i}\,.
\end{equation}

\noindent As mentioned in \Cref{eqn:ineqs}, we have the following well known~\cite{GS02} chain of inequalities:

\begin{equation}  %
  \totalvardist{\p}{\q}^2 \leq \hellingersq{\p}{\q} \leq \kldiv{\p}{\q} \leq \chisquare{\p}{\q}.
\end{equation}
Moreover, for the special case of the uniform distribution $\uniformOn{\dims}$ over $[\dims]$, we have
\begin{equation}
\chisquare{\p}{\uniformOn{\dims}} = \dims\cdot \normtwo{\p-\uniformOn{\dims}}^2\,.
\end{equation}

\subsection{Distribution access models}

For a probability distribution $\p$ on $[\dims]$, we say a unitary $U_{\p}$ is a \emph{synthesizer} for $\p$ if for some $k$
\begin{equation}
  U_{\p}\ket{0^k} = \sum_{i\in[\dims]} \sqrt{p_i} \ket{i}\ket{\psi_i},
\end{equation}
where the $\ket{\psi_i}$'s are normalized states often called ``garbage states''. Note that any classical randomized circuit using $S$ gates that samples from $\p$ can be converted to a synthesizer $U_{\p}$ in a purely black-box way with gate complexity $O(S)$. (See \cite{KO23} for details and a more thorough discussion of synthesizers.)

In this paper, we say an algorithm makes $t$ uses of ``the code for $\p$'' to mean that we use (as a black box) the unitaries $U_{\p}$, $U_{\p}^\dagger$, and controlled-$U_{\p}$ a total of $t$ times in the algorithm. Each of these unitaries is easy to construct given an explicit gate decomposition of $U_{\p}$ with the same gate complexity up to constant factors.

The \emph{quantum string oracle}, which is used in many prior works, is a specific type of source code for $\p$. Here we have standard quantum oracle access to an $m$-bit string $x \in [\dims]^m$ for some $m$. For any symbol $i \in [\dims]$, the probability $\p_i$ is defined as the frequency with which that symbol appears in $x$, i.e., $\p_i = \frac{1}{m} \left| \{j:x_j=i\} \right|$. Note that calling this oracle on the uniform superposition over $m$ gives us a synthesizer for $\p$. When a randomized sampler for $\p$ is converted to a synthesizer, we get a quantum string oracle, but quantum string oracles are not as general as arbitrary synthesizers. For example, all probabilities described by a quantum string oracle will be integer multiples of $\frac{1}{m}$, whereas an arbitrary synthesizer has no such constraint.

\subsection{Quantum Mean Estimation}
When we use QME, we will have the source code for some distribution $\p$ on $[\dims]$, and we will also have explicitly constructed some (rational-valued) random variable $Y : [\dims] \to \Q$ (say, simply as a table).  
From this, one can easily generate code that outputs a sample from~$Y$ (i.e., outputs $Y(\bj)$ for $\bj \sim [\dims]$), using the code for~$\p$ just one time.
We will then use the following QME result from~\cite{KO23}:
\begin{theorem} \label{thm:QME}
  There is a computationally efficient quantum algorithm with the following guarantee: Given the source code for a random variable~$Y$, as well as parameters $\ns$ and $\errprob$, the algorithm uses the code $O(\ns \log(1/\errprob))$ times and outputs an estimate $\wh{\bmu}$  such that $\Pr[|\wh{\mu} - \mu| > \sigma/\ns] \leq \errprob$, where $\mu = \E[Y]$ and $\sigma = \stddev[Y]$.
\end{theorem}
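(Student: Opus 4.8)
The plan is to establish \Cref{thm:QME} by bootstrapping the textbook case of Bernoulli random variables---for which quantum amplitude estimation already provides a quadratic speedup---up to arbitrary real-valued $Y$, and then amplifying the confidence by a median argument. I treat the amplification first, since it accounts for the $\log(1/\errprob)$ factor and can thereafter be set aside: it suffices to exhibit an algorithm that makes $O(\ns)$ uses of the code and outputs $\wh\mu$ with $|\wh\mu-\mu|\le\sigma/\ns$ with probability at least $3/4$, since running $O(\log(1/\errprob))$ independent copies and returning the median then fails with probability at most $\errprob$ by a Chernoff bound on the number of accurate copies (a majority of accurate copies forces the median to be accurate). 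So from now on I aim only for this constant-confidence estimator.

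Next, \emph{iterative recentering} reduces the task to a ``core subroutine'' that, given a centering point $c$ with $\E[(Y-c)^2]=\Theta(\sigma^2)$, estimates $\E[Y]$ to error $O(\sigma/m)$ using $O(m)$ uses of the code. Indeed, a single classical draw of $Y$ (one use of the code) gives, by Chebyshev, a point $c_0$ with $|c_0-\mu|=O(\sigma)$, hence $\E[(Y-c_0)^2]=\Theta(\sigma^2)$; then for $j=1,2,\dots,O(\log\ns)$ one applies the core subroutine with $m=2^j$ and centering point $c_{j-1}$ to obtain $c_j$ with $|c_j-\mu|=O(\sigma/2^j)\le O(\sigma)$, which keeps $\E[(Y-c_j)^2]=\Theta(\sigma^2)$, so the invariant is maintained. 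The final $c_{O(\log\ns)}$ has error $O(\sigma/\ns)$, and the total cost is $\sum_j O(2^j)=O(\ns)$.

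For the core subroutine I may replace $Y$ by $Z:=Y-c$, so that $\E[Z^2]=\Theta(\sigma^2)$, and split $Z=Z^+-Z^-$ on the classical value-table (at no query cost), where $Z^\pm=\max(\pm Z,0)\ge 0$ satisfy $\E[(Z^\pm)^2]\le\E[Z^2]$; so it suffices to estimate the mean of a non-negative random variable with second moment $\Theta(\sigma^2)$. For a $[0,1]$-valued random variable $W$ (given as a table), the code for $\p$ together with a controlled rotation $\ket{w}\ket{0}\mapsto\ket{w}(\sqrt{w}\ket{1}+\sqrt{1-w}\ket{0})$ prepares a state whose $\ket{1}$-branch has squared amplitude exactly $\E_{\p}[W]$, and quantum amplitude estimation then returns $\wh{a}$ with $|\wh{a}-\E[W]|=O(\sqrt{\E[W]}/m+1/m^2)$ using $O(m)$ uses of the code; the \emph{relative}-error nature of this bound is the engine of the whole reduction. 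To handle a general non-negative $Z$ one decomposes it dyadically, $Z=\sum_k Z_k$ with $Z_k=Z\cdot\mathbf{1}\{Z\in[2^k,2^{k+1})\}$, applies the $[0,1]$-primitive to each $W_k=Z_k/2^{k+1}$ with a per-level query budget, discards the levels with $2^k\lessapprox\sigma/\ns$ (whose total contribution is within the error budget) and those with $2^k\gtrapprox\ns\sigma$ (whose contribution is negligible by Chebyshev, $\Pr[Z\ge t]\le\E[Z^2]/t^2$), and sums the $O(\log\ns)$ remaining estimates.

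The step I expect to be the real obstacle is this last summation. A naive allocation of the per-level query budgets only yields error $O(\sigma\cdot\mathrm{polylog}(\ns)/\ns)$---as in the earlier analyses of~\cite{BHH11,CFMdW10,Mon15}---because the dyadic pieces can be spread across $\Theta(\log\ns)$ scales and a Cauchy--Schwarz estimate loses a $\sqrt{\log\ns}$ factor twice. Obtaining the clean $\sigma/\ns$ bound (equivalently, the stated $O(\ns\log(1/\errprob))$ query count with no hidden $\log\ns$) is exactly where the careful argument of~\cite{KO23} is required---e.g., via variance-aware amplitude estimation and a non-uniform decomposition---together with an adaptive estimate of the second moment, which is needed because $\sigma$ is not given in advance and so the truncation scale and the per-level budgets must be chosen on the fly. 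The remaining ingredients---amplitude estimation, the Chernoff and Chebyshev bounds, and the polynomial-time classical bookkeeping that gives computational efficiency---are standard.
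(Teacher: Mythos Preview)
The paper does not prove \Cref{thm:QME}: it is quoted verbatim from~\cite{KO23} as a black-box preliminary, with no argument given. So there is no ``paper's own proof'' to compare against; the only comparison available is with the cited source.

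Your sketch is a faithful outline of the lineage of QME arguments (Grover/Brassard--H{\o}yer--Tapp amplitude estimation $\to$ Montanaro's nonnegative-mean estimator $\to$ Hamoudi/KO refinements), and you are candid about exactly where it falls short: the dyadic decomposition with a uniform per-level budget only yields $O(\sigma\,\mathrm{polylog}(\ns)/\ns)$, and you explicitly defer the removal of the $\mathrm{polylog}$ to~\cite{KO23}. That is an honest assessment, but it also means your proposal is not a self-contained proof of the stated bound---it is a reduction to the very reference the theorem is quoting. Two smaller points worth flagging: (i) the iterative-recentering phase has $O(\log\ns)$ rounds, each of which must succeed, so the constant-probability core subroutine needs its own mild confidence boost (absorbable into the geometric budget, but it should be said); and (ii) you correctly note that $\sigma$ is unknown, so the truncation scales must be chosen adaptively---this is another place where the details of~\cite{KO23} (or~\cite{Ham21}) are doing real work that your outline does not supply.

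In short: nothing you wrote is wrong, but as a proof of \Cref{thm:QME} it is incomplete in precisely the way you yourself diagnose, and the paper's stance is simply to cite the result rather than reprove it.
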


\section{Algorithm in the Large Distance Regime}
  \label{sec:large}
\newcommand{\cc}{c}
\newcommand{\cq}{C}
In this section, we establish~\cref{thm:main}, our main technical contribution. 
We do this by proving the strictly stronger \Cref{thm:main2}, which we restate more formally:
\begin{theorem}
	\label{theo:medium:case}
For any constant $B > 0$, there exists a computationally efficient quantum algorithm (\cref{algo:medium:case}) with the following guarantees: on input $\frac{1}{\ab} \leq \thresh \leq 1$, it makes 
$O(\ab^{1/3}/\thresh^{2/3})$  uses (where the hidden constant depends on $B$) of ``the code'' for an unknown probability distribution $\p$ over $[\dims]$, and satisfies
\begin{enumerate}
	\item If $\chisquare{\p}{\uniform_{\ab}} \leq .99 \thresh$ and $\norminf{\p} \leq B/\ab$, then the algorithm will \accept with probability at least $.99$.
	\item If $\hellingersq{\p}{\uniform_{\ab}} \geq \thresh$, %
	then  the algorithm will \reject with probability at least $.99$.%
\end{enumerate}
\end{theorem}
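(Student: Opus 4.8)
The plan is to instantiate the ``Phase~1 / Phase~2'' template described in \Cref{sec:ideas}. First, set $\ns \coloneqq C_B\cdot \ab^{1/3}/\thresh^{2/3}$ for a suitably large constant $C_B$ depending only on $B$, so that $\ns^3 = \Theta_B(\ab/\thresh^2)$ and, since $\thresh \ge 1/\ab$, also $\ns = \Omega_B(1/\thresh)$. In Phase~1 the algorithm draws $\ns$ black-box samples from $\p$, records the count $X_j$ of each $j \in [\ab]$, forms the rational random variable $Y : [\ab] \to \Q$ given by $Y(j) = \frac{\ab}{\ns}X_j - 1$, and records $M \coloneqq \max_j X_j$. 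A one-line moment computation shows $\E_{\mathrm{P1}}[\mu] = \chisquare{\p}{\uniform_{\ab}}$, where $\mu \coloneqq \E_\p[Y]$, so $\mu$ is an unbiased estimator of the target quantity. In Phase~2 the algorithm runs the QME routine of \Cref{thm:QME} on (the code for) $Y$ with $\Theta(\ns)$ uses and a small constant failure probability, obtaining $\wh\mu$ with $|\wh\mu - \mu| \le \sigma/\ns$ with high probability, where $\sigma \coloneqq \stddev_\p[Y]$. It then accepts iff $M$ is below a suitable ``light-ness'' threshold \emph{and} $\wh\mu \le 0.995\,\thresh$.

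\paragraph{Completeness.} Suppose $\chisquare{\p}{\uniform_{\ab}} \le 0.99\,\thresh$ and $\norminf{\p} \le B/\ab$. Since every $\p_j = O(1/\ab)$ and $\ns \le \ab$, a union bound over binomial upper tails puts $M$ below the light-ness threshold with high probability. For the estimate, two direct second-moment calculations --- using $\norminf{\p} \le B/\ab$ to control the degree-$3$ and degree-$4$ symmetric functions of the $\p_j$ that appear and relate them to $\chisquare{\p}{\uniform_{\ab}}$ --- give $\var_{\mathrm{P1}}[\mu] = O_B(\chisquare{\p}{\uniform_{\ab}}/\ns) = O_B(\thresh/\ns)$ and $\E_{\mathrm{P1}}[\sigma^2] = O_B(\ab/\ns)$. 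By Chebyshev and Markov, with high probability $\mu \le 0.99\,\thresh + O_B(\sqrt{\thresh/\ns})$ and $\sigma/\ns = O_B(\sqrt{\ab/\ns^3})$; substituting $\ns^3 = \Theta_B(\ab/\thresh^2)$ and $\ns = \Omega_B(1/\thresh)$, both correction terms are at most $0.0025\,\thresh$ once $C_B$ is large, so $\wh\mu \le 0.995\,\thresh$ and the algorithm accepts.

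\paragraph{Soundness.} Suppose $\hellingersq{\p}{\uniform_{\ab}} \ge \thresh$. By \eqref{eqn:ineqs}, $\E_{\mathrm{P1}}[\mu] = \chisquare{\p}{\uniform_{\ab}} \ge \hellingersq{\p}{\uniform_{\ab}} \ge \thresh$. If $M$ exceeds the light-ness threshold the algorithm rejects, so we may condition on the complementary event. The key step is to bound $\sigma^2$ by a \emph{linear} function of $\mu$ on this event: from $\sigma^2 \le \E_\p[Y^2] = \frac{\ab^2}{\ns^2}\sum_j \p_j X_j^2 - 2\mu - 1$ and $\sum_j \p_j X_j^2 \le M \sum_j \p_j X_j = M\cdot\frac{\ns}{\ab}(\mu+1)$, one gets $\sigma^2 \le \frac{\ab}{\ns}M(\mu+1)$, hence a QME error bar of order $\sqrt{\ab M(\mu+1)/\ns^3} = \Theta_B(\thresh\sqrt{M(\mu+1)})$ (up to the $M$-factor, where one must be careful about the precise light-ness threshold). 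This is exactly the ``correlation'' phenomenon anticipated in \Cref{sec:ideas}: a large QME error bar forces $\mu$ to be correspondingly large, so $\wh\mu$ cannot fall much below $\thresh$ unless $\mu$ is already so large that $\wh\mu \ge \mu/2 \ge \thresh$. Making this quantitative requires a one-sided lower-tail bound on $\mu$ over the conditioned probability space --- the hard case being $\chisquare{\p}{\uniform_{\ab}} \approx \thresh$, where $\mu$ is not concentrated and Chebyshev is useless --- showing that the Phase-1 outcomes on which $\mu$ drops well below $\thresh$ are precisely those on which $\sigma^2$ is also small, so that $\wh\mu > 0.995\,\thresh$ with high probability in all cases and the algorithm rejects. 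It is also this step that forces the Hellinger-squared (rather than $\chi^2$) threshold in case~(2): the rare large values of $\sigma^2$ one must absorb are comparable with $\hellingersq{\p}{\uniform_{\ab}}$, not with the potentially much larger $\chisquare{\p}{\uniform_{\ab}}$.

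\paragraph{Main obstacle.} Completeness is careful but essentially routine second-moment bookkeeping. The real difficulty is the soundness analysis: choosing the light-ness check and then proving quantitatively that the rare Phase-1 outcomes with anomalously large $\sigma^2$ coincide with those on which $\mu$ is anomalously large --- so that the QME error is always ``paid for'' by a correspondingly larger signal --- and doing so without leaking extra $\mathrm{polylog}(\ab)$ factors into $\ns$.
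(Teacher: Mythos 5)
Your algorithm and your completeness sketch match the paper's (the max-count check, QME on $Y_j=\tfrac{\ab}{\ns}X_j-1$, accept iff $\wh{\mu}\le .995\thresh$, and Chebyshev/Markov bookkeeping under $\norminf{\p}\le B/\ab$), and your inequality $\sigma^2\le\tfrac{\ab}{\ns}M(\mu+1)$ is exactly the paper's key step in one of its cases. But your soundness argument has a genuine gap: you never establish a lower-tail bound for $\bmu$ in Case (2), and this is the heart of the proof. Knowing $\E[\bmu]=\chisquare{\p}{\uniform_{\ab}}\ge\thresh$ is not enough: with heavy elements ($\p_j\gg 1/\ab$) the $\chi^2$-divergence can vastly exceed $\thresh$ while $\bmu$ has enormous variance, so Chebyshev gives nothing --- precisely the regime you flag as ``the hard case'' and then leave unresolved. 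The paper closes it by splitting $[\ab]$ into light ($\eps_j\le b-1$) and heavy elements and exploiting the Hellinger structure $(\sqrt{1+\eps_j}-1)^2\le\min\{|\eps_j|,\eps_j^2\}$, so that $\eta\le\eta_1+\eta_2$ with $\eta_1=\tfrac1\ab\sum_{\text{heavy}}\eps_j$ \emph{linear} in the heavy $\eps_j$'s (this is exactly where the Hellinger, rather than $\chi^2$, threshold becomes usable). If $\eta_2$ dominates, $\bmu_2$ concentrates by Chebyshev; if $\eta_1$ dominates, the number of heavy draws is Binomial with mean $p_{\text{heavy}}\ns\ge\eta_1\ns\gtrsim\thresh\ns=\Omega(1)$, giving $\bmu_1\ge\tfrac12(b-1)\eta_1$ w.h.p.\ while $\bmu_2\ge-1.1\eta_1$ w.h.p.; together these yield $\bmu\ge .997\eta$ except with probability $.002$ (\cref{eqn:case22}). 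Note also that the criterion you state --- that the outcomes where $\mu$ drops well below $\thresh$ are those where $\sigma^2$ is small --- is backwards as a soundness goal: if both $\bmu$ and $\bsigma$ were small, then $\wh{\bmu}\approx\bmu<\thresh$ and the tester would wrongly \accept. What is actually needed (and proved) is that $\bmu$ essentially never falls below $\approx\thresh$, together with $\bsigma/\ns\ll\bmu$ on those outcomes; the paper's ``rare-event correlation'' is the converse direction (when $\bsigma^2$ is unusually large, $\bmu$ is large too).

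The second concrete gap is the $M$-factor you acknowledge only parenthetically. When $\thresh$ is near $1/\ab$, $\ns=\Theta(\ab^{1/3}/\thresh^{2/3})$ approaches $\ab$, and then even for $\p=\uniform_{\ab}$ the maximum count is $\Theta(\log\ab/\log\log\ab)$, so the lightness threshold must be taken $\Theta(\log\ab)$ for completeness; your bound $\sigma^2\le\tfrac{\ab}{\ns}M(\mu+1)$ then leaves a $\sqrt{\log\ab}$ factor in the QME error bar that cannot be absorbed into a constant and would force a polylogarithmic blow-up in $\ns$. The paper's \cref{algo:medium:case} therefore sets $L=100$ only when $\ns\le\ab^{.99}/B$ and $L=\Theta(\ln\ab)$ otherwise, and in the latter regime abandons the $\|\bX\|_\infty$-based bound entirely: it uses the check to infer $\norminf{\p}\le 2L/\ns$ and bounds $\E[\bsigma_{\text{heavy}}^2]$ directly (via $\eta_1\le\ab\eta_1^2$), while $\E[\bsigma_{\text{light}}^2]$ is handled by the same moment computation as in Case (1). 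Without these two pieces --- the lower tail for $\bmu$ and the small-$\thresh$ treatment of $\bsigma^2$ --- your proposal is an accurate plan for the paper's proof, but not yet a proof.
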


\begin{algorithm}[htbp]
	\begin{algorithmic}[1]
	\Require Parameter $\frac{1}{\ab} \leq \thresh \leq 1$, constant $B \geq 1$.
	\State Let $\cc = \cc(B)$ and let $\cq = \cq(\cc)$ be sufficiently large, and let $L$ be defined as
	\[
	L \coloneqq \begin{cases}
		100 & \text{if $\ns \leq \ab^{.99}/B$,} \\
		B\cc \ln \ab & \text{if $\ns > \ab^{.99}/B$.}
	\end{cases}
	\]
    \State Set $\ns \coloneqq \lceil \cc \ab^{1/3}/\thresh^{2/3}\rceil$.
	\State Make $\ns$ draws $\bJ_1, \dots, \bJ_{\ns}$, and let $\bX_j = \sum_{t=1}^{\ns} \indic{J_t=j}$ be the number of times $j\in[\ab]$ is seen.
    \If{$\bX_j \geq L$ for any $j$} 
		\reject  \label{algo:Linf} 
		\EndIf
	\State Do QME with $\cq\ns$ ``samples'' on the random variable $\bY$ defined by $\bY_j = \frac{\ab}{\ns} X_j - 1$, obtaining $\wh{\bmu}$. 
	\If{$\wh{\bmu} \leq .995 \thresh$} 
		\accept 
	\Else\ 
		\reject
	\EndIf
	\end{algorithmic}
		\caption{\label{algo:medium:case}for the large distance regime}
\end{algorithm}
\begin{proof}
	Let us start by recording the following inequalities that we will frequently use: 
	\begin{equation}	\label{ineq:nbd}
		\ns =  \lceil \cc \ab^{1/3}/\thresh^{2/3}\rceil,\ \thresh \geq 1/\ab \quad\implies\quad \cc/\thresh \leq \ns \leq \cc\ab.
	\end{equation}
	We begin with a simple lemma regarding the check on~\cref{algo:Linf}:
	\begin{lemma} \label{prop:line3}
	If $\norminf{\p} \leq B/\ab$, then \Cref{algo:Linf} will \reject with probability at most~$.001$.
	Conversely, if $\norminf{\p} > 2L/\ns$, then \Cref{algo:Linf} will \reject with probability at least~$.999$.
\end{lemma}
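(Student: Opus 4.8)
The plan is to observe that each occupancy count $\bX_j$ is distributed as $\mathrm{Binomial}(\ns,\p_j)$ with $\E[\bX_j]=\ns\p_j$, and then to control the relevant binomial tails with standard multiplicative Chernoff bounds together with a union bound over $j\in[\ab]$. Note that \Cref{algo:Linf} rejects precisely when $\max_j\bX_j\ge L$, so the first assertion concerns the \emph{upper} tail of all the $\bX_j$, while the second concerns the \emph{lower} tail of a single heavy coordinate.

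For the first assertion, assume $\norminf{\p}\le B/\ab$, so $\E[\bX_j]\le\ns B/\ab$ for every $j$, and split on the definition of $L$. In the regime $\ns\le\ab^{.99}/B$ (so $L=100$), the means are tiny, and a crude union bound suffices: using $\Pr[\bX_j\ge100]\le\binom{\ns}{100}\p_j^{100}\le\ns^{100}\p_j^{100}/100!$ together with $\sum_j\p_j^{100}\le(\norminf{\p})^{99}\le(B/\ab)^{99}$, summing over $j$ gives $\Pr[\max_j\bX_j\ge100]\le\ns^{100}B^{99}/(100!\,\ab^{99})\le1/(B\cdot100!)$ by the case hypothesis $\ns\le\ab^{.99}/B$, which is well below $.001$. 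In the regime $\ns>\ab^{.99}/B$ (so $L=B\cc\ln\ab$), inequality~\eqref{ineq:nbd} gives $\ns\le\cc\ab$ and hence $\E[\bX_j]\le\cc B$; the multiplicative Chernoff upper bound then yields $\Pr[\bX_j\ge B\cc\ln\ab]\le(e\cc B/(B\cc\ln\ab))^{B\cc\ln\ab}=(e/\ln\ab)^{B\cc\ln\ab}\le\ab^{-B\cc}$ once $\ab$ is large enough that $e/\ln\ab\le1/e$, and a union bound over the $\ab$ coordinates gives $\ab^{1-B\cc}\le.001$ provided $\cc=\cc(B)$ is chosen large enough.

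For the converse assertion, assume $\norminf{\p}>2L/\ns$ and fix any $j^\ast$ with $\p_{j^\ast}>2L/\ns$, so that $\E[\bX_{j^\ast}]=\ns\p_{j^\ast}>2L$ (the hypothesis is vacuous unless $\ns>2L$, which we may therefore assume). The lower-tail multiplicative Chernoff bound gives $\Pr[\bX_{j^\ast}<L]\le\Pr\bigl[\bX_{j^\ast}<\tfrac12\E[\bX_{j^\ast}]\bigr]\le\exp\bigl(-\E[\bX_{j^\ast}]/8\bigr)\le e^{-L/4}$, which is at most $e^{-25}<.001$ when $L=100$ and at most $\ab^{-B\cc/4}<.001$ when $L=B\cc\ln\ab$. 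On the complementary event $\bX_{j^\ast}\ge L$, \Cref{algo:Linf} rejects, so the algorithm rejects with probability at least $.999$.

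The only real subtlety, and the step I expect to take the most care, is the constant bookkeeping: the single constant $\cc=\cc(B)$ simultaneously sets the leading factor in $\ns$ and (through $L=B\cc\ln\ab$) the rejection threshold, and both assertions must hold for one consistent choice. One checks that taking $\cc$ large enough drives $\ab^{1-B\cc}$ and the analogous Chernoff terms below $.001$ uniformly, while the finitely many values of $\ab$ below the resulting absolute threshold—for which the asymptotic budget $\ns=O(\ab^{1/3}/\thresh^{2/3})$ allows an arbitrarily large constant $\ns$—are disposed of by a trivial brute-force variant of the tester.
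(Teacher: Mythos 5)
Your proof is correct and follows essentially the same route as the paper's: model each count $\bX_j$ as $\mathrm{Bin}(\ns,\p_j)$, split on the two regimes defining $L$, bound the upper tails (a crude binomial-coefficient bound when $L=100$, multiplicative Chernoff plus a union bound when $L=B\cc\ln\ab$), and use the lower-tail Chernoff bound for the "conversely" direction. The large-$\ab$ caveat you flag in the $L=B\cc\ln\ab$ case is real but is equally implicit in the paper's "standard Chernoff and union bound (provided $\cc$ is large enough)", so your explicit acknowledgment (and brute-force fallback for constantly many small $\ab$) is, if anything, slightly more careful.
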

\begin{proof}
	Let $\bX_j \sim \text{Bin}(\ns,p_j)$ denote the number of times~$j$ is drawn.
	The second (``conversely'') part of of the proposition follows from a standard Chernoff bound. %
	As for the first part, suppose $\norminf{\p} \leq B/\ab$.
	Now on one hand, if $\ns \leq \ab^{.99}/B$, so that $L = 100$, we have
	\begin{equation}
		\Pr[\text{Bin}(\ns,p_j) \geq 100] \leq \binom{\ns}{100} p_j^{100} \leq ((e\ns/100)p_j)^{100} \leq (e/(100\ab^{.01}))^{100} \leq .001/\ab,
	\end{equation}
	and thus $\bX_j < 100$ for all~$j$ except with probability at most~$.001$, as desired.
	Otherwise, $L = B\cc \ln \ab$, and since $
		\E[\bX_j] \leq B\ns/\ab \leq B\cc,
	$
	the desired result follows from a standard Chernoff and union bound (provided $\cc$ is large enough).
\end{proof}
	\noindent From this, we conclude:
	\begin{itemize}
		\item In Case (1), Line~\ref{algo:Linf} rejects with probability at most $.001$.
		\item In Case (2), we may assume $\norminf{\p}\leq 2L/\ns$ and $\|\bX\|_\infty \leq L$, else Line~\ref{algo:Linf} rejects with probability $\geq .999$. Call this observation $(\diamondsuit)$.
	\end{itemize}

	Now to begin the QME analysis, write $p_j = \frac{1+\eps_j}{\ab}$, where $\eps_j \in [-1, \ab-1]$, 
	and let $\bmu = \sum_{j=1}^{\ab} p_j \bY_j$, the mean of~$\bY$ (from QME's point of view).
	Writing $\eta \coloneqq \hellingersq{p}{\uniform_{\ab}} $, our first goal will be to show:
	\begin{align}
		\text{In Case (1),\qquad} \bmu & \leq .991\thresh && \text{except with probability at most $.001$;}\label{eqn:case1}\\
		\text{In Case (2),\qquad} \bmu & \geq .997\eta && \text{except with probability at most $.002$.} \label{eqn:case2}
	\end{align}
	Starting with \Cref{eqn:case1}, a short calculation (using $\sum_{j=1}^{\ab} \eps_j = 0$) shows
	\begin{equation} \label{eqn:bmu}
		\bmu = \avg_{t=1}^{\ns} \{\eps_{\bJ_t}\}
	\quad\implies\quad
		\E[\bmu] = \frac{1}{\ab}  \sum_{j=1}^{\ab} \eps_j^2 = \chisquare{p}{\uniform_{\ab}}
	\quad\implies\quad \E[\bmu] \leq .99\thresh \text{ in Case~(1).}
	\end{equation}
	Also in Case~(1) we get from \Cref{eqn:bmu} that
	\begin{equation}	\label{ineq:varmu}
		\var[\bmu] = \frac{1}{\ns} \var_{\bj \sim p}[\eps_{\bj}] \leq \frac{1}{\ns} \E_{\bj \sim p}[\eps_{\bj}^2] \leq \frac{B}{\ns \ab}
		\sum_{j=1}^{\ns} \eps_j^2 = \frac{B}{\ns} \chisquare{p}{\uniform_{\ab}} \leq \frac{.99B\thresh}{\ns} \leq \frac{B\thresh^2}{\cc},
	\end{equation}
	the last inequality using \Cref{ineq:nbd}.
	Combining the preceding two inequalities and using Chebyshev, we indeed conclude \Cref{eqn:case1} (provided $\cc = \cc(B)$ is sufficiently large).

	Towards \Cref{eqn:case2},  let  $b \geq 2$ be a certain universal constant to be chosen later, and say that $j \in [\ab]$ is \emph{light} if $p_j \leq b/\ab$ (i.e., $\eps_j \leq b-1$), \emph{heavy} otherwise. 
	We will write 
	\begin{equation}
		\bmu_1 = \avg_{t=1}^{\ns}\{\eps_{\bJ_t} : \bJ_t \text{ heavy}\} \geq 0, \quad	 \bmu_2 = \avg_{t=1}^{\ns}\{\eps_{\bJ_t} : \bJ_t \text{ light}\} \qquad \text{(so $\bmu = \bmu_1 + \bmu_2$),}
	\end{equation}
	and also observe
	\begin{equation}	\label{eqn:helly}
		\eta =\hellingersq{p}{\uniform_{\ab}} = \frac{1}{\ab} \sum_{j=1}^{\ab}(\sqrt{1+\eps_j} - 1)^2 \leq \frac{1}{\ab} \sum_{j=1}^{\ab}\min\{|\eps_j|, \eps_j^2\} \leq \frac{1}{\ab} \sum_{\text{heavy }j} \eps_j + \frac{1}{\ab} \sum_{\text{light }j}\eps_j^2 \eqqcolon \eta_1 + \eta_2.
	\end{equation}
	Let us now make some estimates.
	 First:
	\begin{equation} \label{eqn:pheavy}
		p_{\text{heavy}} \coloneqq \sum_{j \text{ heavy}} p_j = \frac{1}{\ab}\sum_{j \text{ heavy}} (1+\eps_j) \geq \eta_1.%
	\end{equation}
	Also, similar to our Case~(1) estimates we have
	\begin{equation}	\label{eqn:case2a2}
		\E[\bmu_2] = \frac{1}{\ab} \sum_{\text{light } j} (\eps_j^2 + \eps_j) = \eta_2  - \eta_1 \quad \text{(where we used $\sum_{j=1}^{\ab} \eps_j = 0$),}
	\end{equation}
	and
	\begin{equation} \label{eqn:thevar}
		\var[\bmu_2] = \frac{1}{\ns} \var_{\bj \sim p}[\indicSet{\bj \text{ light}} \cdot \eps_{\bj}] 
		\leq \frac{1}{\ns} \E_{\bj \sim p}[\indicSet{\bj \text{ light}} \cdot \eps_{\bj}^2] 
		\leq \frac{b}{\ns\ab} \sum_{j \text{ light}} \eps_j^2 = \frac{b}{\ns} \eta_2 \leq \frac{b}{\cc} \thresh\eta_2 \leq \frac{b}{\cc}\eta_2 \eta \text{ (in Case~(2)).}
	\end{equation}
	We will now establish \Cref{eqn:case2}; in fact, we we even will show the following very slightly stronger fact:
	\begin{align}	
		\text{In Case (2),\qquad} \bmu & \geq .997(\eta_1+\eta_2) \geq .997\eta && \text{except with probability at most $.002$.} \label{eqn:case22}	
	\end{align}
	We divide into two subcases:
	\paragraph{Case (2a):} $\eta_1 \leq .001\eta_2$.  \quad In this case we have $\eta_2 \geq \frac{1}{1.001} (\eta_1+\eta_2)$, and $\E[\bmu_2] \geq .999\eta_2$ from \Cref{eqn:case2a2}.  Since \Cref{eqn:thevar} implies $\var[\bmu_2] \leq 1.001\frac{b}{\cc} \eta_2^2$, Chebyshev's inequality tells us that $\bmu_2 \geq .998\eta_2$ except with probability at most~$.001$ (provided~$\cc$ is large enough).  
	But then $\bmu \geq \bmu_2 \geq \frac{.998}{1.001}(\eta_1+\eta_2)$, confirming \Cref{eqn:case22}.
	\paragraph{Case (2b):} $\eta_1 > .001\eta_2$.  \quad In this case we have $\eta_1 \geq \frac{.001}{1.001} (\eta_1+\eta_2) \geq .0009(\eta_1+\eta_2)$.  
	We now use that heavy~$j$ have $\eps_j \geq b-1$ to observe that
	\begin{equation}
		\bmu_1 = \avg_{t=1}^{\ns}\{\eps_{\bJ_t} : \bJ_t \text{ heavy}\} \geq (b-1) \cdot (\text{fraction of } \bJ_t\text{'s that are heavy}) =  (b-1) \cdot \frac{\text{Bin}(\ns, p_{\text{heavy}})}{\ns}
	\end{equation}
    (in distribution).
	We see that $\E[\bmu_1] \geq (b-1)p_{\text{heavy}}$, and moreover concentration of Binomials and \Cref{eqn:pheavy} imply that 
	\begin{equation} \label{ineq:crunch1}
		\bmu_1\geq \frac12(b-1)p_{\text{heavy}} \geq \frac12(b-1)\eta_1 \text{ except with probability at most $.001$},	
	\end{equation}
	provided that $p_{\text{heavy}} \ns$ is a sufficiently large constant.
	But we can indeed ensure this by taking $\cc$ sufficient large: by \Cref{eqn:pheavy},  being in Case~(2b), and \Cref{ineq:nbd}, it holds that
	\begin{equation}
		p_{\text{heavy}} \ns \geq \eta_1 \ns \geq .0009(\eta_1+\eta_2) \ns \geq .0009 \eta \ns \geq .0009\thresh \ns \geq .0009\cc.
	\end{equation}
	At the same time, \Cref{eqn:case2a2} certainly implies $\E[\bmu_2] \geq -\eta_1$, and \Cref{eqn:thevar} implies $\var[\bmu_2] \leq \frac{b}{\cc} \eta_2 (\eta_1+\eta_2) \leq \frac{1000\cdot 1001 b}{\cc} \eta_1^2$ (using Case~(2b)).
	Thus Chebyshev implies 
	\begin{equation} \label{ineq:crunch2}
		\bmu_2 \geq -1.1\eta_1 \text{ except with probability at most~$.001$},
	\end{equation}
	provided $\cc$ is large enough.	
	Combining \Cref{ineq:crunch1,ineq:crunch2} yields
	\begin{equation}
		\bmu = \bmu_1 + \bmu_2 \geq (\tfrac{b-1}{2}-1.1)\eta_1 \geq .0009(\tfrac{b-1}{2}-1.1)(\eta_1+\eta_2) \text{ except with probability at most~$.002$},
	\end{equation}
	which verifies \Cref{eqn:case22} provided $b$ is a large enough constant.

	\bigskip

	We have now verified the properties of~$\bmu$ claimed in \Cref{eqn:case1,eqn:case22}.  
	Next we analyze the random variable~$\bsigma^2$ that represents the variance of~$\bY$ (from QME's point of view). 
    Our goal will be to show:
    \begin{align}
		\text{In Case (1),\qquad} \bsigma^2/(\cq\ns)^2 & \leq 10^{-6} \cdot \thresh^2 && \text{except with probability at most $.001$,}  \label{eqn:caseee1}\\
		\text{In Case (2),\qquad} \bsigma^2/(\cq\ns)^2 & \leq 10^{-6} \cdot \bmu^2 && \text{except with probability at most $.001$.} \label{eqn:caseee2}
	\end{align}
    Together with \Cref{eqn:case1,eqn:case22}, these facts are sufficient to complete the proof of the theorem, by the QME guarantee of \Cref{thm:QME}.

    We have:
	\begin{equation} \label{ineq:recall}
		\bsigma^2 \coloneqq \sum_{j=1}^{\ab} p_j\bY_j^2 - \bmu^2 = (\ab/\ns)^2\sum_{j=1}^{\ab} p_j\bX_j^2 - (\bmu+1)^2 \leq  (\ab/\ns)^2\sum_{j=1}^{\ab} p_j\bX_j^2 = \bsigma_S^2 + \bsigma_{S^c}^2,
	\end{equation}
	where we've defined $\bsigma_S^2 \coloneqq (\ab/\ns)^2\sum_{j \in S} p_j\bX_j^2$ and $S^c = [\ab] \setminus S$.  
	We will be making two different choices for~$S$ later, but we will always assume
	\begin{equation}	\label{ineq:neg}
		S \supseteq \{j : j \text{ light}\}, \quad \text{which implies }\sum_{j \in S} \eps_j \leq 0
	\end{equation}
	(the implication because $\sum_{j=1}^{\ab} \eps_j = 0$ and $S^c$ contains only $j$'s with $\eps_j \geq b-1 \geq 0$).
	Now since $\E[\bX_j^2] = \ns p_j(1-p_j) + (\ns p_j)^2 \leq \ns p_j + (\ns p_j)^2$, we have
	\begin{align}
		\E[\bsigma_S^2] &\leq (\ab^2/\ns)\sum_{j\in S} p_j^2 + \ab^2 \sum_{j\in S} p_j^3 \\
		&\leq \ab/\ns  +  (2/\ns)\sum_{j\in S}\eps_j + (1/\ns)\sum_{j\in S}\eps_j^2 + 1/\ab + (3/\ab) \sum_{j \in S} \eps_j +  (3/\ab) \sum_{j\in S} \eps_j^2 + (1/\ab) \sum_{j\in S} \eps_j^3 \label{ineq:just1}\\
        &\leq (5\cc\ab/\ns)\parens*{1+ \frac{1}{\ab}\sum_{j\in S}\eps_j + \frac{1}{\ab}\sum_{j\in S}\eps_j^2} + \frac{1}{\ab} \sum_{j\in S} \eps_j^3 
        \label{ineq:toreturn}
	\end{align}
	(where the last inequality used  $1/\ab \leq \cc/\ns \leq (\cc-1)\ab/\ns$ from \Cref{ineq:nbd}).
	Using \Cref{ineq:neg} to drop the term of \Cref{ineq:toreturn} that's linear in the~$\eps_j$'s, we thereby conclude
	\begin{align}
		\E[\bsigma_S^2/(\cq\ns)^2] \leq \E[\bsigma_S^2/\ns^2] 
        &\leq (5\cc\ab/\ns^3)\parens*{1 + \frac{1}{\ab}\sum_{j\in S}\eps_j^2} + (\ab^{1/2}/\ns^2) \parens*{\frac{1}{\ab}\sum_{j\in S}\eps_j^2}^{3/2}  \\
        &\leq (5\thresh^2/\cc^2)(1 + \eta_S) +  \frac{\thresh^{4/3}}{\cc^2 \ab^{1/6}} \eta_S^{3/2},
	\end{align}
	where $\eta_S \coloneqq \frac{1}{\ab}\sum_{j\in S}\eps_j^2$.
	In Case~(1) we select $S = [\ab]$, so $\eta_S = \chisquare{p}{\uniform_{\ab}} \leq .99\thresh \leq \thresh \leq 1$, and the above bound gives
	\begin{equation}
		\text{Case (1)} \implies \E[\bsigma^2/(\cq\ns)^2] \leq 10\thresh^2/\cc^2 +\frac{\thresh^{17/6}}{\cc^2 \ab^{1/6}}  \leq \cdot 10^{-9} \cdot \thresh^2
	\end{equation}
	(provided $\cc$ is large enough).  Now \Cref{eqn:caseee1} follows by Markov's inequality.

	In Case~(2) we select $S  = \{j : j \text{ light}\}$, so $\eta_S = \eta_2$ and we conclude (using obvious notation)
	\begin{equation} \label{ineq:put}
		\text{Case (2)} \implies \E[\bsigma_{\text{light}}^2/(\cq\ns)^2] \leq (5\thresh^2/\cc^2)(1+\eta_2)+\frac{\thresh^{4/3}}{\cc^2 \ab^{1/6}}  \eta_2^{3/2}
		\leq .4 \cdot 10^{-9} \cdot (\eta_1+\eta_2)^2,
	\end{equation}
	(provided $\cc$ large enough), 
	where we used $\thresh \leq \eta \leq \eta_1 + \eta_2$ and also $\thresh \leq 1$.
	We now complete the bounding of~$\bsigma^2$ in Case~(2) by two different strategies:

	\paragraph{Case (2.i):} $\ns > \ab^{.99}/B$. \quad In this case, $L = B\cc \ln \ab$, and $(\diamondsuit)$ tells us $\norminf{\p} \leq 2L/\ns$, so we have
    \begin{equation}    \label{ineq:pinf}
        \norminf{\p} \leq \frac{2B\cc \ln \ab}{\ns} \leq \frac{2B^2 \cc \ln \ab}{\ab^{.99}}.
    \end{equation}
    Now returning to \Cref{ineq:toreturn}, we get
	\begin{align}
		\E[\bsigma_{\text{heavy}}^2/(\cq\ns)^2] &\leq \frac{5\cc\ab}{\cq^2\ns^3} + \frac{5\cc\ab}{\cq^2\ns^3}\parens*{1 +  \eps_{\text{max}} + \frac{\ns}{5\cc\ab}\eps_{\text{max}}^2} \cdot \frac{1}{\ab}\sum_{j \text{ heavy}}\eps_j\\
		&\leq \frac{5\thresh^2}{(\cq\cc)^2}  + \frac{5\cc\ab^2}{\cq^2\ns^3}\parens*{\norminf{\p} + \frac{\ns}{5 \cc} \cdot \norminf{\p}^2}\eta_1
        \leq \frac{5\thresh^2}{(\cq\cc)^2}  + \frac{14B^6\cc^2 \ln^2 \ab}{\cq^2 \ab^{1.96}}\eta_1, \label{ineq:12}
	\end{align}
    where we used \Cref{ineq:pinf} and $\ns > \ab^{.99}/B$.
	We can again bound the first expression in \Cref{ineq:12} as $\frac{5\thresh^2}{(\cq \cc)^2} \leq 10^{-6} \cdot (\eta_1 + \eta_2)^2$.
	As for the second expression, either $\eta_1 =0$ (there are no heavy~$j$'s) or else $\eta_1 \geq \frac{b-1}{\ab}$ (there is at least one heavy~$j$). 
	In either case, we have $\eta_1 \leq  \frac{\ab}{b-1}\eta_1^2 \leq \ab \eta_1^2$, so we can bound this second expression  by
	\begin{equation}
		\frac{14B^6\cc^2 \ln^2 \ab}{\cq^2 \ab^{.96}}\eta_1^2
        \leq .4 \cdot 10^{-9} \cdot (\eta_1 + \eta_2)^2
	\end{equation}
    where we used $\cq = \cq(\cc)$ sufficiently large (and we could have taken $\cq = 1$ were willing to assume~$\ab$ sufficiently large).
	Putting this bound together with \Cref{ineq:put} we obtain:
	\begin{equation}
		\text{Case~(2.i)} \implies \E[\bsigma/(\cq\ns)^2] \leq .8\cdot 10^{-9}\cdot (\eta_1+\eta_2)^2 \leq \tfrac{.8}{.997} \cdot 10^{-9} \cdot \bmu^2 \leq \cdot 10^{-9} \cdot \bmu^2,
	\end{equation}
    using \Cref{eqn:case22}.  \Cref{eqn:caseee2} now follows (in this Case~(2.i)) by Markov's inequality.

	\paragraph{Case (2.ii):} $\ns \leq \ab^{.99}/B$. \quad In this case we use a different strategy.  
	Recall from \Cref{ineq:recall} that
	\begin{equation}
		\bsigma^2 \leq (\ab/\ns)^2 \sum_{j=1}^{\ab} p_j \bX_j^2 \leq (\ab/\ns)^2 \|\bX\|_\infty \sum_{j=1}^{\ab} p_j \bX_j = (\ab/\ns) \|\bX\|_\infty (1+\bmu).
	\end{equation}
    By $(\diamondsuit)$ we may assume $\|\bX\|_\infty \leq L = 100$, the equality because we are in Case~(2.ii).
	Thus
	\begin{equation}
		\bsigma^2/(\cq\ns)^2 \leq \bsigma^2/\ns^2 \leq 100 (\ab/\ns^3) (1+\bmu) \leq \frac{100 \thresh^2}{\cc^3}+ \frac{100 \thresh^2}{\cc^3}\bmu \leq 10^{-6} \cdot \bmu^2.
	\end{equation}
	(provided $\cc$ large enough), where we used $\thresh \leq \eta \leq \frac{1}{.997}\bmu$ (from \Cref{eqn:case22}) and also $\thresh \leq 1$.
    This verifies \cref{eqn:caseee2} in Case~(2.ii), completing the proof.
\end{proof}

\section{Algorithm in the Small Distance Regime}
  \label{sec:small}
\newcommand{\dstltwo}{\textcolor{ForestGreen}{\tau}}
In this section, we provide an alternative (and arguably simpler) proof of the main result of~\cite{LuoWL24}:
\begin{theorem}
  \label{thm:smalldistance:restated}
There is a computationally efficient quantum algorithm (\cref{algo:qme:hashing}) for uniformity testing with the following guarantees: it takes $O(\dims^{1/2}/\dst)$ ``samples'' from an unknown probability distribution $\p$ over $[\dims]$, and distinguishes with probability at least $2/3$ between (1)~$\chisquare{\p}{\uniform_\dims} \leq \frac{\dst^2}{144}$, and (2)~$\chisquare{\p}{\uniform_\dims} > \dst^2$.
\end{theorem}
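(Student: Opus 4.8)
The plan is to use the classical-analysis-plus-QME methodology, but with a random variable built from a \emph{random hash} rather than from empirical counts. Everything rests on one elementary fact. Let $\boldsymbol{A}\subseteq[\dims]$ be a uniformly random balanced subset (equivalently, from an $O(\log\dims)$-bit seed, a $4$-wise independent hash $\boldsymbol{h}\colon[\dims]\to\{0,1\}$ with $\boldsymbol{A}=\boldsymbol{h}^{-1}(1)$). Then
\[
  \boldsymbol{Z}\coloneqq \p(\boldsymbol{A})-\frac{|\boldsymbol{A}|}{\dims}=\sum_{j=1}^\dims \boldsymbol{h}(j)\Bigl(\p_j-\tfrac1\dims\Bigr)
\]
has mean $0$ and, by pairwise independence, second moment $\E[\boldsymbol{Z}^2]=\tfrac14\sum_{j=1}^\dims(\p_j-\tfrac1\dims)^2=\tfrac14\normtwo{\p-\uniform_\dims}^2=\chisquare{\p}{\uniform_\dims}/(4\dims)$. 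The key point is that when $\p=\uniform_\dims$ the quantity $\boldsymbol{Z}$ vanishes \emph{identically} (an exactly known baseline), whereas for $\p$ with $\chisquare{\p}{\uniform_\dims}>\dst^2$ its typical magnitude is $\sqrt{\chisquare{\p}{\uniform_\dims}/(4\dims)}\gtrsim\dst/\sqrt\dims$. So it suffices to estimate the single number $\p(\boldsymbol{A})=\E_\p[\boldsymbol{h}]$ to additive accuracy a little below $\dst/\sqrt\dims$ and compare it to the known value $|\boldsymbol{A}|/\dims$; and since the $\{0,1\}$-valued random variable $\boldsymbol{h}$ has standard deviation $\sqrt{\p(\boldsymbol{A})(1-\p(\boldsymbol{A}))}\le\tfrac12$ under $\p$, applying \Cref{thm:QME} with $\ns\coloneqq\lceil C\sqrt\dims/\dst\rceil$ ``samples'' for a large constant $C$ returns $\wh{\boldsymbol{\mu}}$ with $|\wh{\boldsymbol{\mu}}-\p(\boldsymbol{A})|\le\tfrac1{2\ns}=\Theta(\dst/\sqrt\dims)$ except with small probability. (It is exactly the appearance of $\sqrt{\chi^2}$, rather than $\chi^2$, as the signal that buys the improvement from the classical $\sqrt\dims/\dst^2$ to $\sqrt\dims/\dst$: collision-probability-type estimators compare a quantity to its baseline and see only a $\chi^2$-sized deviation, whereas here we compare $\p(\boldsymbol{A})$ itself to a baseline it equals exactly on the uniform distribution.)

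The algorithm repeats this $m=O(1)$ times with mutually independent hashes $\boldsymbol{h}_1,\dots,\boldsymbol{h}_m$, forms $\boldsymbol{D}_i\coloneqq(\wh{\boldsymbol{\mu}}_i-|\boldsymbol{A}_i|/\dims)^2$, and accepts iff $\tfrac1m\sum_{i=1}^m\boldsymbol{D}_i$ lies below a threshold of order $\dst^2/\dims$; its cost is $m\cdot O(\ns)=O(\sqrt\dims/\dst)$ uses of the code. For the analysis put $\boldsymbol{D}_i'\coloneqq(\p(\boldsymbol{A}_i)-|\boldsymbol{A}_i|/\dims)^2=\boldsymbol{Z}_i^2$, so $\E[\boldsymbol{D}_i']=\chisquare{\p}{\uniform_\dims}/(4\dims)$ by the identity above. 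Using $4$-wise independence (or the analogous moment computation for a random balanced subset) one gets the fourth-moment bound $\E[\boldsymbol{D}_i'^2]=\E[\boldsymbol{Z}_i^4]\le 3\,(\E[\boldsymbol{Z}_i^2])^2=3\,(\E[\boldsymbol{D}_i'])^2$; hence by Chebyshev, $\tfrac1m\sum_i\boldsymbol{D}_i'$ is between $\tfrac12$ and $\tfrac32$ times $\chisquare{\p}{\uniform_\dims}/(4\dims)$ with probability at least $0.95$, provided $m$ is a large enough constant. Thus $\tfrac1m\sum_i\boldsymbol{D}_i'\le\tfrac32\cdot\tfrac{\dst^2/144}{4\dims}$ in case~(1) and $\tfrac1m\sum_i\boldsymbol{D}_i'>\tfrac12\cdot\tfrac{\dst^2}{4\dims}$ in case~(2).

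It remains to transfer this to the \emph{observed} average. From $|\wh{\boldsymbol{\mu}}_i-\p(\boldsymbol{A}_i)|\le\tfrac1{2\ns}$ one gets $|\boldsymbol{D}_i-\boldsymbol{D}_i'|\le\tfrac1\ns\sqrt{\boldsymbol{D}_i'}+\tfrac1{4\ns^2}$, and Cauchy--Schwarz yields $\bigl|\tfrac1m\sum_i\boldsymbol{D}_i-\tfrac1m\sum_i\boldsymbol{D}_i'\bigr|\le\tfrac1\ns\sqrt{\tfrac1m\sum_i\boldsymbol{D}_i'}+\tfrac1{4\ns^2}$. In case~(1) the right-hand side is $O(\dst^2/(C\dims))$; in case~(2), since $\tfrac1m\sum_i\boldsymbol{D}_i'\gtrsim\dst^2/\dims$ we have $\tfrac1\ns\sqrt{\tfrac1m\sum_i\boldsymbol{D}_i'}\big/\!\bigl(\tfrac1m\sum_i\boldsymbol{D}_i'\bigr)=\bigl(\ns\sqrt{\tfrac1m\sum_i\boldsymbol{D}_i'}\bigr)^{-1}\le\sqrt{8}/C$, so both correction terms are only an $O(1/C)$-fraction of $\tfrac1m\sum_i\boldsymbol{D}_i'$ --- and it is precisely this inequality, ``$\ns\gtrsim 1/\sqrt{\tfrac1m\sum_i\boldsymbol{D}_i'}$'', that pins the sample count at $\Theta(\sqrt\dims/\dst)$. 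Choosing the acceptance threshold anywhere between the resulting case-(1) value and case-(2) value, and absorbing a union bound over the $m$ repetitions into the $\log(1/\errprob)$ factor already present in \Cref{thm:QME}, completes the argument (with overall success probability comfortably above $2/3$).

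I expect no serious obstacle here --- this is the ``arguably simpler'' algorithm. The only care required is in bookkeeping the constants, so that the factor-$144$ gap between the two $\chi^2$ thresholds dominates both the Chebyshev/fourth-moment concentration slack in the average of the $\boldsymbol{D}_i'$ and the QME error propagated through the squaring; tuning these is the reason the statement carries the otherwise-arbitrary constant $144$.
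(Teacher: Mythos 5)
Your proposal is correct, and it shares the paper's core algorithmic idea---classically hash the domain down to a binary partition and use QME to estimate the probability of one side to additive accuracy $\Theta(\dst/\sqrt{\dims})$---but the analysis takes a genuinely different route. The paper invokes an anti-concentration lemma for random binary hashing (with probability at least $\tfrac{1}{12}(1-4\alpha^2)^2$ over the random subset $S$, the gap $|\p(S)-\q(S)|$ is at least $\alpha\normtwo{\p-\q}$) together with a Markov-inequality converse, and then distinguishes the two cases by estimating the \emph{probability} of a gap event over $T=O(1)$ repetitions, thresholding the empirical frequency between $1/64$ and $1/48$. You instead average the squared deviations $(\wh{\bmu}_i-|\boldsymbol{A}_i|/\dims)^2$ directly, using the exact identity $\E[\boldsymbol{Z}^2]=\tfrac14\normtwo{\p-\uniformOn{\dims}}^2=\chisquare{\p}{\uniformOn{\dims}}/(4\dims)$ plus the fourth-moment bound $\E[\boldsymbol{Z}^4]\le 3(\E[\boldsymbol{Z}^2])^2$ and Chebyshev; this dispenses with the anti-concentration lemma entirely and yields, as a byproduct, a constant-factor estimator of $\chisquare{\p}{\uniformOn{\dims}}/(4\dims)$ with explicit tolerance constants, and your propagation of the QME error through the squaring (via $|\boldsymbol{D}_i-\boldsymbol{D}_i'|\le\tfrac1\ns\sqrt{\boldsymbol{D}_i'}+\tfrac1{4\ns^2}$ and Cauchy--Schwarz) is sound. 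What the paper's route buys in exchange: it needs only the one-sided hashing lemma and its easy converse, it extends verbatim to closeness testing of \emph{two} unknown distributions (the paper's more general $\lp[2]$ result), and it decouples the QME calls on $\p$ and $\q$; your comparison-to-a-known-baseline exploits that $\q=\uniformOn{\dims}$ is known, though it too would generalize by estimating $\q(\boldsymbol{A})$ separately. One cosmetic slip to fix: a uniformly random \emph{balanced} subset is not equivalent to a $4$-wise independent hash, and the exact identity $\E[\boldsymbol{Z}^2]=\tfrac14\normtwo{\p-\uniformOn{\dims}}^2$ (as well as the stated fourth-moment bound) is what you get from pairwise/$4$-wise independent unbiased inclusions (or i.i.d.\ inclusions), only approximately for a balanced subset---either instantiation works up to constants, so this does not affect correctness, but the parenthetical ``equivalently'' should be dropped or the moments recomputed for the balanced case.
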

This in turn will follow from the more general result on tolerant $\lp[2]$ closeness testing, where one is given access to the source code for \emph{two} unknown probability distributions $\p,\q$ over $[\dims]$, and one seeks to distinguish $\normtwo{\p-\q} \leq c\cdot \dstltwo$ from $\normtwo{\p-\q} \geq \dstltwo$.
\begin{theorem}
  \label{thm:smalldistance:l2}
There is a computationally efficient quantum algorithm (\cref{algo:qme:hashing}) for closeness testing with the following guarantees: it takes $O(1/\dstltwo)$ ``samples'' from two unknown probability distributions $\p,\q$ over $[\dims]$, and distinguishes with probability at least $2/3$ between (1)~$\normtwo{\p-\q} \leq \frac{\dstltwo}{12}$, and (2)~$\normtwo{\p-\q} > \dstltwo$.
\end{theorem}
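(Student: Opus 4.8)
The plan is to sidestep the fact that $\normtwo{\p-\q}^2$ is a \emph{quadratically small} target --- it is of order $\dstltwo^2$ in the interesting range, so amplitude-estimating it directly would cost $\Omega(1/\dstltwo^2)$ --- by reducing it to a handful of \emph{linear} functionals, each of which Quantum Mean Estimation (\Cref{thm:QME}) resolves to additive accuracy $\dstltwo$ at cost $O(1/\dstltwo)$. The device is a random sign hash: let $h\colon[\dims]\to\{-1,+1\}$ be drawn from a $4$-wise independent family, and set
\[
D_h \;\coloneqq\; \E_{\bj\sim\p}[h(\bj)] - \E_{\bj\sim\q}[h(\bj)] \;=\; \sum_{j=1}^\dims (\p_j-\q_j)\,h(j)\,.
\]
Expanding and using pairwise (resp.\ $4$-wise) independence of the signs gives $\E_h[D_h^2]=\normtwo{\p-\q}^2$ and $\E_h[D_h^4]\le 3\normtwo{\p-\q}^4$, so $D_h^2$ is an unbiased estimator of $\normtwo{\p-\q}^2$ with variance $O(\normtwo{\p-\q}^4)$; averaging it over $O(1)$ independent hashes yields an estimate with a \emph{constant} relative error, which is exactly what is needed since the two cases differ by the constant factor $144$.

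Thus \cref{algo:qme:hashing} is: draw $R=O(1)$ independent $4$-wise independent sign hashes $h_1,\dots,h_R$; for each $r$, run \Cref{thm:QME} on the bounded random variable $h_r(\bj)$ --- whose source code is ``draw $\bj\sim\p$, output $h_r(\bj)$'', one use of the code for $\p$ per sample, and likewise for $\q$ --- with $\Theta(1/\dstltwo)$ samples and a constant error budget, obtaining $\wh{a}_r\approx\E_{\p}[h_r]$ and $\wh{b}_r\approx\E_{\q}[h_r]$, each within $\pm c_0\dstltwo$ except with small constant probability (here $c_0$ is a small absolute constant; note $\stddev_{\bj\sim\p}[h_r(\bj)]\le 1$ since $|h_r|\le 1$). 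Accept iff $\frac1R\sum_{r=1}^R(\wh{a}_r-\wh{b}_r)^2$ lies below a threshold set to a suitable constant multiple of $\dstltwo^2$. There are $2R=O(1)$ QME calls, each costing $O(1/\dstltwo)$ uses of the code, for $O(1/\dstltwo)$ uses overall; and since $4$-wise independent sign hashes over $[\dims]$ are efficiently sampleable and evaluable and \Cref{thm:QME} is computationally efficient, so is the whole algorithm.

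For correctness there are two independent sources of randomness. Over the hashes, Chebyshev applied to the moment facts above shows that, with high constant probability, the ``ideal'' statistic $A\coloneqq\frac1R\sum_r D_r^2$ (with $D_r\coloneqq\E_{\p}[h_r]-\E_{\q}[h_r]$) satisfies $\tfrac12\normtwo{\p-\q}^2\le A\le 2\normtwo{\p-\q}^2$ in Case (2), and $A\le\dstltwo^2/72$ (say), safely below the threshold, in Case (1); this is the step that forces $R$ to be a large enough constant and that exploits the gap between $\dstltwo/12$ and $\dstltwo$. The QME errors move each $D_r$ by at most $2c_0\dstltwo$, hence move $D_r^2$ by at most $4c_0\dstltwo|D_r|+4c_0^2\dstltwo^2$; summing and using Cauchy--Schwarz ($\frac1R\sum_r|D_r|\le\sqrt{A}$) bounds the total perturbation of $A$ by $O(c_0\dstltwo\sqrt{A}+c_0^2\dstltwo^2)$. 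The one genuine subtlety --- and the step I expect to need the most care --- is that this bound is \emph{self-referential}, involving $\sqrt{A}$, the very quantity being estimated; but since $\sqrt{A}\lesssim\normtwo{\p-\q}$ in both cases, the perturbation is an $O(c_0)$ fraction of the relevant scale ($\normtwo{\p-\q}^2>\dstltwo^2$ in Case (2), $\dstltwo^2$ in Case (1)), so choosing $c_0$ a small enough absolute constant keeps the statistic above the threshold in Case (2) and below it in Case (1); a union bound over the $O(1)$ bad events gives success probability $\ge 2/3$. Finally, \Cref{thm:smalldistance:restated} follows by taking $\q=\uniformOn{\dims}$ (whose source code we construct ourselves, at no cost in uses of the code for $\p$) and rescaling: since $\chisquare{\p}{\uniformOn{\dims}}=\dims\normtwo{\p-\uniformOn{\dims}}^2$, the $\chi^2$-thresholds $\dst^2/144$ and $\dst^2$ become $\normtwo{\cdot}$-thresholds $\dst/(12\sqrt\dims)$ and $\dst/\sqrt\dims$, i.e.\ $\dstltwo=\dst/\sqrt\dims$ and cost $O(1/\dstltwo)=O(\sqrt\dims/\dst)$.
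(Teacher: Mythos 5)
Your proposal is correct, but it takes a genuinely different route from the paper's. The paper hashes the domain down to two elements via a uniformly random subset $S\subseteq[\dims]$ and never estimates $\normtwo{\p-\q}^2$ directly: it invokes an off-the-shelf anti-concentration (``domain compression'') lemma guaranteeing $|\p(S)-\q(S)|\ge\alpha\normtwo{\p-\q}$ with probability at least $\tfrac{1}{12}(1-4\alpha^2)^2$ (\cref{lemma:random:binary:hashing}), proves a Markov-type converse (\cref{lemma:random:binary:hashing:2}), and then, in each of $T=O(1)$ rounds of \cref{algo:qme:hashing}, uses QME to estimate $\p(S_t)$ and $\q(S_t)$ to additive $O(\dstltwo)$ and votes on how often the hashed gap is large (probability $\ge 1/48$ in the far case vs.\ $\le 1/64$ in the close case). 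You instead use a $4$-wise independent Rademacher hash and observe that $D_h^2=(\E_{\p}[h]-\E_{\q}[h])^2$ is an unbiased estimator of $\normtwo{\p-\q}^2$ with variance $O(\normtwo{\p-\q}^4)$, so Chebyshev over $R=O(1)$ hashes gives a constant-factor approximation, after which the same kind of QME calls (each on a $\pm 1$-valued variable, cost $O(1/\dstltwo)$) supply the hashed means; your treatment of the self-referential perturbation $4c_0\dstltwo\sqrt{A}+4c_0^2\dstltwo^2$ via $\sqrt{A}\lesssim\normtwo{\p-\q}$ in both cases is the right fix and is sound, as is the union bound over the $O(1)$ hash and QME failure events. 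The structural idea is shared---compress to scalar linear functionals of $\p$ and $\q$ that \cref{thm:QME} resolves to $\pm O(\dstltwo)$ at cost $O(1/\dstltwo)$, while keeping the uses of the code for $\p$ and $\q$ decoupled---but the analyses differ: the paper's event-probability voting leans on the cited binary-hashing lemma and yields explicit constants with minimal computation, whereas your second-/fourth-moment argument is self-contained, needs only $4$-wise independent (hence low-randomness, efficiently evaluable) hash bits rather than a fully random subset, and gives the slightly stronger guarantee of a constant-factor estimate of $\normtwo{\p-\q}^2$, so the tolerance constant $12$ can be tightened by adjusting $R$ and $c_0$. Your derivation of \cref{thm:smalldistance:restated} via $\dstltwo=\dst/\sqrt{\dims}$ matches the paper's.
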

\cref{thm:smalldistance:restated} can then be obtained as a direct corollary by setting $\dstltwo = \dst/\sqrt{\dims}$, recalling that when $\q$ is the uniform distribution $\uniformOn{\ab}$, $\lp[2]$ distance and $\chi^2$ divergence are equivalent:
\[
 \normtwo{\p-\uniformOn{\ab}}^2 = \sum_{i=1}^{\dims} (\p_i-1/\dims)^2 = \frac{1}{\dims} \sum_{i=1}^{\dims} \frac{(\p_i-1/\dims)^2}{1/\dims}
 = \frac{1}{\dims} \chisquare{\p}{\uniformOn{\ab}}
\]
We emphasize that the result of~\cref{thm:smalldistance:l2} itself is not new, as a quantum algorithm achieving the same sample complexity (in the same access model) was recently obtained by~\cite{LuoWL24}.\footnote{Technically,~\cite{LuoWL24}'s result can be seen as slightly stronger, in that it allows to test $\normtwo{\p-\q} \leq (1-\gamma)\dstltwo$ vs. $\normtwo{\p-\q}{\uniform_\dims} > \dstltwo$, for arbitrarily small constant $\gamma>0$.} However, our algorithm differs significantly from the one in~\cite{LuoWL24}, and we believe it to be of independent interest for several reasons:
\begin{itemize}
  \item it is \emph{conceptually very simple}: (classically) hash the domain down to \emph{two} elements, and use QME to estimate the bias of the resulting Bernoulli;
  \item it neatly \emph{separates the quantum and classical aspects} of the task, only using QME (as a blackbox) in a single step of the algorithm;
  \item in contrast to the algorithm of~\cite{LuoWL24}, it \emph{decouples the use of the source code from $\p$ and $\q$}, allowing one to run our algorithm when the accesses to the two distributions are on different machines, locations, or even will be granted at different points in time (\ie one can run part of the algorithm using the source code for $\p$, and, one continent and a year apart, run the remaining part on the now-available source code for $\q$ without needing $\p$ anymore).
\end{itemize}
The idea behind~\cref{thm:smalldistance:l2} is relatively simple: previous work (in the classical setting) showed that hashing the domain from $\dims$ to a much smaller $\ab' \ll \dims$ could yield sample-optimal testing algorithms in some settings, \eg when testing under privacy bandwidth, or memory constraints. Indeed, while this ``domain compression'' reduces the total variation distance by a factor $\Theta(\sqrt{\ab'/\dims})$, this shrinkage is, in these settings, balanced by the reduction in domain size. The key insight in our algorithm is then to (1)~use this hashing with respect to $\lp[2]$ distance, not total variation distance, and show that one can in this case get a two-sided guarantee in the distance (low-distortion embedding) instead of a one-sided one; and (2)~compress the domain all the way to $\dims'=2$, so that one can then invoke the QME algorithm to simply estimate the bias of a coin to an additive $\pm \dstltwo$, a task for which a quantum quadratic speedup is well known.
\begin{proof}[Proof of~\cref{thm:smalldistance:l2}]
As mentioned above, a key building block of our algorithm is the following ``binary hashing lemma,'' a simple case of the domain compression primitive of~\cite{ACHST:20}:
\begin{lemma}[Random Binary Hashing (Lemma~2.9 and Remark~2.4 of~\cite{CanonneTopicsDT2022}]
  \label{lemma:random:binary:hashing}
  Let $\p,\q\in\distribs{\ab}$. Then, for every $\alpha\in[0,1/2]$,
  \[
        \probaDistrOf{S}{ |\p(S)-\q(S)| \geq \alpha\normtwo{\p-\q} } \geq \frac{1}{12}(1-4\alpha^2)^2\,,
  \] 
  where $S\subseteq[\ab]$ is a uniformly random subset of $[\ab]$.
\end{lemma}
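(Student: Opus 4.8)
The plan is a textbook second-moment (Paley--Zygmund) argument applied to the signed mass of the random set. Write $v \coloneqq \p - \q \in \R^\ab$, so that $\sum_{i=1}^\ab v_i = 0$ and $\normtwo{v} = \normtwo{\p-\q}$, and set $Z \coloneqq \p(S) - \q(S) = \sum_{i=1}^\ab \indic{i \in S}\, v_i$. Since each $i$ lies in $S$ independently with probability $1/2$, I would first recenter: because $\sum_i v_i = 0$ we may replace $\indic{i \in S}$ by $c_i \coloneqq \indic{i \in S} - \tfrac12$, obtaining $Z = \sum_{i=1}^\ab c_i v_i$ with the $c_i$ i.i.d.\ and uniform on $\{-\tfrac12, +\tfrac12\}$. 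This recentering is the one genuinely useful step: it makes the $c_i$ mean-zero, symmetric, and bounded, which is exactly what is needed to get a clean fourth-moment bound.

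Next I would compute two moments of $Z$. By independence and $\E[c_i^2] = \tfrac14$ we get $\E[Z^2] = \tfrac14\normtwo{v}^2$. Expanding $Z^4 = \sum_{i,j,k,l} c_ic_jc_kc_l\, v_iv_jv_kv_l$ and keeping only the surviving index patterns (all four indices equal, with $\E[c_i^4]=\tfrac1{16}$, or two coincident pairs, each term contributing $\tfrac1{16}$) gives $\E[Z^4] = \tfrac1{16}\bigl(3\normtwo{v}^4 - 2\sum_i v_i^4\bigr) \le \tfrac{3}{16}\normtwo{v}^4 = 3\,\E[Z^2]^2$. I would then invoke the Paley--Zygmund inequality on the nonnegative random variable $Z^2$ with parameter $\theta \coloneqq 4\alpha^2$, which lies in $[0,1]$ precisely because $\alpha \in [0,\tfrac12]$ --- the sole place the hypothesis on $\alpha$ is used:
\[
\Pr\bigl[Z^2 \ge 4\alpha^2\,\E[Z^2]\bigr] \;\ge\; (1-4\alpha^2)^2\,\frac{\E[Z^2]^2}{\E[Z^4]} \;\ge\; \tfrac{1}{3}(1-4\alpha^2)^2 \;\ge\; \tfrac{1}{12}(1-4\alpha^2)^2.
\]
Finally $4\alpha^2\,\E[Z^2] = \alpha^2\normtwo{v}^2$, so the event on the left is literally $\{\,|\p(S)-\q(S)| \ge \alpha\normtwo{\p-\q}\,\}$, which is the desired conclusion.

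There is no real obstacle here: the fourth-moment expansion is the only computation of substance, and it is routine once the $c_i$ are centered. I would note that the argument actually yields the constant $\tfrac{1}{3}$ rather than $\tfrac{1}{12}$, so the stated bound holds with margin to spare; I would retain the weaker $\tfrac{1}{12}$ only for consistency with the cited survey, and because keeping a loose constant does no harm in the downstream application.
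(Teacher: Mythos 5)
Your proof is correct and is essentially the argument the paper relies on (it defers to the cited survey, and reproduces the same representation $\p(S)-\q(S)=\tfrac12\sum_i \xi_i\delta_i$ with Rademacher $\xi_i$ — your $c_i=\xi_i/2$ — in the proof of the companion Lemma~\ref{lemma:random:binary:hashing:2}): recenter using $\sum_i(\p_i-\q_i)=0$, compute second and fourth moments, and apply Paley--Zygmund with $\theta=4\alpha^2$. Your observation that the computation actually yields the sharper constant $\tfrac13$ is right; the stated $\tfrac1{12}$ is just the looser constant inherited from the cited survey.
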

Given our goal of tolerant testing, we also require a converse to~\cref{lemma:random:binary:hashing}, stated and proven below:
\begin{lemma}
  \label{lemma:random:binary:hashing:2}
  Let $\p,\q\in\distribs{\ab}$. Then, for every $\beta\in[1/2,\infty)$,
  \[
        \probaDistrOf{S}{ |\p(S)-\q(S)| \geq \beta\normtwo{\p-\q} } \leq \frac{1}{4\beta^2}\,,
  \] 
  where $S\subseteq[\ab]$ is a uniformly random subset of $[\ab]$.
\end{lemma}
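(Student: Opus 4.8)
The plan is to recognize $\p(S)-\q(S)$ as a mean-zero sum of independent bounded random variables and apply Chebyshev's inequality; unlike the forward direction \cref{lemma:random:binary:hashing} (which needs a Paley--Zygmund-style anti-concentration estimate), the converse is a routine second-moment computation. First I would set $v \coloneqq \p-\q \in \R^{\ab}$ and note that $\sum_{i=1}^{\ab} v_i = 0$, since $\p$ and $\q$ both sum to $1$. Writing $\xi_i \coloneqq \indic{i \in S}$, so that the $\xi_i$ are i.i.d.\ $\mathrm{Bernoulli}(1/2)$, we have $\p(S)-\q(S) = \sum_{i=1}^{\ab}\xi_i v_i \eqqcolon Z$.

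Next I would compute the first two moments of $Z$: by linearity, $\E[Z] = \tfrac12\sum_{i=1}^{\ab} v_i = 0$, and by independence of the $\xi_i$ (each of variance $\tfrac14$),
\[
  \var[Z] \;=\; \sum_{i=1}^{\ab} v_i^2\, \var[\xi_i] \;=\; \tfrac14\sum_{i=1}^{\ab} v_i^2 \;=\; \tfrac14\normtwo{\p-\q}^2 .
\]
Then Chebyshev's inequality gives $\probaDistrOf{S}{|Z| \geq t} \leq \var[Z]/t^2 = \normtwo{\p-\q}^2/(4t^2)$ for every $t>0$; taking $t = \beta\normtwo{\p-\q}$ yields exactly the claimed bound $\tfrac{1}{4\beta^2}$, with the hypothesis $\beta \geq 1/2$ used only to make this bound at most $1$ (hence meaningful).

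The argument has no real obstacle; the one point of care is the degenerate case $\p = \q$, i.e.\ $\normtwo{\p-\q}=0$, where the substitution $t=\beta\normtwo{\p-\q}$ is vacuous and the statement should be read with the usual convention that it is then trivial. Everything else is immediate, so I would keep the write-up to a few lines, and (together with \cref{lemma:random:binary:hashing}) present the two lemmas as providing the low-distortion embedding of $\lp[2]$ distance into the hashed Bernoulli that the tester in \cref{thm:smalldistance:l2} relies on.
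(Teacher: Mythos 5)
Your proof is correct and is essentially the paper's argument: the paper writes $\p(S)-\q(S)=\tfrac12 Z$ with $Z=\sum_i \delta_i\xi_i$ for Rademacher $\xi_i$, computes $\E[Z^2]=\normtwo{\delta}^2$, and applies Markov to $Z^2$, which is exactly your Chebyshev bound phrased with Bernoulli indicators instead of Rademachers. Your remark about the degenerate case $\p=\q$ matches the paper's implicit handling (its proof actually bounds the probability of the strict inequality), so no further changes are needed.
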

\begin{proof}
As in the proof of~\cref{lemma:random:binary:hashing}, we write $\delta \eqdef \p-\q\in\R^{\ab}$ and $\p(S)-\q(S) = \frac{1}{2}Z$, where $Z \eqdef \sum_{i=1}^{\ab} \delta_i \xi_i$ for $\xi_1,\dots,\xi_{\ab}$ \iid Rademacher. We will use the following fact established in the proof of this lemma, which we reproduce for completeness: 
\begin{equation}
	\label{eq:random:binary:hashing:expectation}
  \bEE{Z^2} = \sum_{1\leq i,j\leq \ab} \delta_i\delta_j \bEE{\xi_i\xi_j} = \sum_{i=1}^{\ab} \delta_i^2 = \normtwo{\delta}^2\,.
\end{equation}
By Markov's inequality, we then have
\[
        \probaDistrOf{S}{ |\p(S)-\q(S)| > \beta\normtwo{\p-\q} } 
        = \probaDistrOf{S}{ Z^2 > 4\beta^2\bEE{Z^2} } 
        \leq \frac{1}{4\beta^2}
 \]
 concluding the proof.
\end{proof}
While the above two lemmas allow us to obtain a slightly more general result than in the theorem statement by keeping $\alpha,\beta$ as free parameters, for concreteness, set $\alpha \coloneqq 1/(2\sqrt{2})$ and $\beta=4$. This implies the following:
\begin{itemize}
  \item If $\normtwo{\p-\q} \geq \dstltwo$, then
  \[
      \probaDistrOf{S}{ \abs{\p(S)-\frac{|S|}{\dims}} \geq \frac{\dstltwo}{\sqrt{8}} } \geq \frac{1}{48}
  \]
  \item If $\normtwo{\p-\q} \leq \frac{\dstltwo}{12}$, then
  \[
    \probaDistrOf{S}{ \abs{\p(S)-\frac{|S|}{\dims}} \geq \frac{\dstltwo}{\sqrt{9}} } \leq \frac{1}{64}\,.
  \]
\end{itemize}
where $S\subseteq[\ab]$ is a uniformly random subset of $[\ab]$. This allows us to distinguish between the two cases with only $\bigO{1}$ repetitions:
\begin{algorithm}[H]
\begin{algorithmic}[1]
	\State Set $T = O(1)$, $\errprob \eqdef \frac{1}{600}$, $\tau \eqdef \frac{1/48+1/64}{2}$.	\Comment{$\errprob \leq \frac{1}{3}\Paren{\frac{1}{48}-\frac{1}{64}}$.}
	\For{$t = 1$ \textbf{to} $T$}
		\State Pick a u.a.r. subset $S_t\subseteq[\ab]$ (independently of previous iterations)
		\State Estimate $\p(S_t), \q(S_t)$ by $\hat{p}_t,\hat{q}_t$ to within $\pm \frac{\tau}{100}$ with error probability $\errprob$. \Comment{QME}\label{step:qme}
		\If{$|\hat{p}_t-\hat{q}_t| \leq \frac{\dst}{\sqrt{8\dims}}$}
		    \ $b_t \gets 0$
		\Else
		    \ $b_t \gets 1$
		\EndIf
	\EndFor
	\Return \accept if $\frac{1}{T}\sum_{t=1}^T b_t \leq \tau$ \Comment{Estimate of the probability \accept}
\end{algorithmic}
	\caption{QME+Binary Hashing Tester}\label{algo:qme:hashing}
\end{algorithm}
A standard analysis shows that, for $T$ a sufficiently large constant, with probability at least $2/3$ the estimate $\frac{1}{T}\sum_{t=1}^T b_t$ will be within an additive $\errprob+\frac{1}{1000}$ of the corresponding value (either $1/48$ or $1/64$), in which case the output is correct. The total number of samples required is $T$ times the sample of the Quantum Mean Estimation call on Line~\ref{step:qme}, which is $O(1/\dstltwo)$: the complexity of  getting a $O(\dstltwo)$-additive estimate of the mean of a Bernoulli random variable with high (constant) probability. 
This concludes the proof.
\end{proof}

\section{Algorithm in the Giant Distance Regime}
  \label{sec:giant}
\newcommand{\pow}{\mathrm{pow}}
\newcommand{\del}{\delta}
\newcommand{\bigns}{{\textcolor{red}{N}}}

In this final section, we show that, in the (stronger) quantum string oracle model, one can perform tolerant uniformity testing with respect to $\chi^2$ divergence in the ``very large parameter regime,'' that is, to distinguish $\chisquare{\p}{\uniform_\dims} \leq c \thresh$ from $\chisquare{\p}{\uniform_\dims} > \thresh$ for $\thresh \geq 1$: 
\begin{theorem}\label{thm:giant}
  There is a computationally efficient quantum algorithm for uniformity testing with the following guarantees: For $\thresh\geq 1$, the algorithm makes $O(\dims^{1/3}/\thresh^{1/3})$ calls to the quantum string oracle for an unknown distribution $\p$ over $[\dims]$, and distinguishes with probability at least~$.99$ between 
  \begin{equation}
      (1)~\chisquare{\p}{\uniform_\dims} \leq c\cdot \thresh, 
      \qquad \textrm{and} \qquad (2)~\chisquare{\p}{\uniform_\dims} > \thresh\,,    
  \end{equation}
  where $c>0$ is an absolute constant. Moreover, this query complexity is optimal.
\end{theorem}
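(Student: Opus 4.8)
The plan is to recast the task in terms of \emph{collision counting} and solve it with a quantum-walk estimator, and to establish matching hardness by a reduction from the collision problem.

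\textbf{Reformulation.} In the quantum string oracle model the unknown distribution is $\p_i = \frac1m\,|\{a\in[m] : x_a = i\}|$ for a fixed string $x\in[\dims]^m$, so that $\normtwo{\p}^2 = \Pr_{a,b}[x_a = x_b]$ for independent uniform $a,b\in[m]$; equivalently the number of monochromatic ordered pairs $t \coloneqq |\{(a,b)\in[m]^2 : x_a = x_b\}|$ equals $m^2\normtwo{\p}^2$. Since $\chisquare{\p}{\uniform_\dims} = \dims\normtwo{\p}^2 - 1$, distinguishing $\chisquare{\p}{\uniform_\dims}\le c\thresh$ from $\chisquare{\p}{\uniform_\dims}>\thresh$ is \emph{exactly} distinguishing $t \le (c\thresh+1)m^2/\dims$ from $t>(\thresh+1)m^2/\dims$; because $\thresh\ge 1$ this is a constant-multiplicative-factor gap in the collision count $t$, around the value $t_0\coloneqq \thresh m^2/\dims$.

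\textbf{Upper bound.} I would decide which side of $t_0$ the count $t$ lies on by running an Ambainis-style element-distinctness quantum walk on the Johnson graph of size-$r$ subsets of $[m]$ — a subset being \emph{marked} iff it contains two positions carrying the same symbol — combined with amplitude estimation. Taking $r = \Theta\!\big((\dims/\thresh)^{1/3}\big) = \Theta\!\big((m^2/t_0)^{1/3}\big)$, the marked fraction $\lambda$ satisfies $\lambda \asymp \min\{1,\ r^2\normtwo{\p}^2\}$, and crucially this estimate holds \emph{no matter how the collision mass $\normtwo{\p}^2$ is spread across symbols} — in particular when one atom carries almost all of it: a union bound over symbols gives the upper bound on $\lambda$, while the lower bound comes from second-moment / inclusion–exclusion estimates on the number of monochromatic pairs in a random size-$r$ subset (with a short case split according to whether a single symbol dominates $\normtwo{\p}^2$), and the contribution of monochromatic triples is negligible for this $r$. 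Thus in Case~(1) $\lambda \lesssim c\,(\thresh/\dims)^{1/3} \eqqcolon c\,\lambda^\ast$ while in Case~(2) $\lambda \gtrsim (\thresh/\dims)^{1/3}=\lambda^\ast$, a constant-factor separation. Running the walk — a setup stage of $O(r)$ queries, then $O(1/\sqrt{\lambda^\ast})$ further walk-plus-check phases of $O(\sqrt r)$ queries each (the Johnson-graph spectral gap being $\Theta(1/r)$), with amplitude estimation reading off only the decision $\lambda\le c\lambda^\ast$ vs.\ $\lambda\ge\lambda^\ast$ — uses $O\!\big(r + \sqrt r/\sqrt{\lambda^\ast}\big) = O\!\big((\dims/\thresh)^{1/3}\big) = O(\dims^{1/3}/\thresh^{1/3})$ oracle queries and is correct with probability $\ge .99$.

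\textbf{Main obstacle.} The delicate point is exactly the robustness just invoked. The naive candidate is the classical Brassard--H{\o}yer--Tapp collision algorithm~\cite{BHT98} in its ``draw $k$ samples, then search for a re-match'' form; but a distribution whose entire $\chi^2$-divergence comes from a single atom of weight $\approx\sqrt{\thresh/\dims}$ is a legitimate Case~(2) instance on which a classically-drawn sample of size $(\dims/\thresh)^{1/3}$ misses the atom with probability $1-o(1)$, so the re-match probability conditioned on the draw fails to concentrate and the budget would blow up to $(\dims/\thresh)^{1/2}$. What rescues the quantum walk is that its marked fraction $\lambda\asymp\min\{1,r^2\normtwo{\p}^2\}$ is a \emph{deterministic} function of $\p$ that is insensitive to such clustering; pinning down this two-sided estimate on $\lambda$ cleanly — and checking that amplitude estimation on the walk yields the needed decision with no logarithmic overhead, so the bound is $O(\cdot)$ and not $\widetilde{O}(\cdot)$ — is where the real work lies, and it is also why the result is only claimed in the string oracle model, where a ``collision'' is literally an equality between two entries of a fixed string.

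\textbf{Lower bound.} For the matching $\Omega(\dims^{1/3}/\thresh^{1/3})$ I would reduce from the collision problem: given $f:[\dims]\to[\dims]$ promised to be $1$-to-$1$ or $g$-to-$1$ with $g\coloneqq\max\{3,\lceil 2\thresh\rceil\}$, run the tester on the string $x=(f(1),\dots,f(\dims))\in[\dims]^\dims$. If $f$ is $1$-to-$1$ then $\p=\uniform_\dims$ and $\chisquare{\p}{\uniform_\dims}=0\le c\thresh$; if $f$ is $g$-to-$1$ then $\p$ is uniform on a set of size $\dims/g$, so $\chisquare{\p}{\uniform_\dims}=g-1>\thresh$ (using $\thresh\ge 1$ and the choice of $g$). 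Each oracle call is one query to $f$, so a $T$-query tester distinguishes $1$-to-$1$ from $g$-to-$1$ on a domain of size $\dims$; by the collision lower bound of~\cite{AS04} and its extensions to small range and general collision multiplicity (due to Kutin and to Ambainis), $T=\Omega\!\big((\dims/g)^{1/3}\big)=\Omega(\dims^{1/3}/\thresh^{1/3})$, which is also the correct ($\Omega(1)$) bound in the degenerate range $\thresh=\Theta(\dims)$.
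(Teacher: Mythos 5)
Your lower bound coincides with the paper's: reduce from distinguishing $1$-to-$1$ from $g$-to-$1$ functions with $g=\Theta(\thresh)$, observe that the $g$-to-$1$ case is uniform on $\dims/g$ symbols and hence has $\chisquare{\p}{\uniformOn{\dims}}=g-1\geq\thresh$, and invoke Kutin's extension of the Aaronson--Shi collision bound. That part is fine.

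The upper bound, however, has a genuine gap, and it sits exactly at the step you deferred. Your stated separation ``in Case~(1), $\lambda$ is at most of order $c\,(\thresh/\dims)^{1/3}$'' is false: since $\chisquare{\p}{\uniform_\dims}=\dims\normtwo{\p}^2-1$, even $\p=\uniform_\dims$ (which lies in Case~(1) for every $c$) gives a marked fraction $\lambda\approx\binom{r}{2}\frac{1}{\dims}$, which for $\thresh=O(1)$ is already $\Theta\big((\thresh/\dims)^{1/3}\big)$; the ``$+1$'' cannot be shrunk by shrinking $c$. More precisely, Case~(1) only gives $\lambda\leq\binom{r}{2}\frac{1+c\thresh}{\dims}$ while Case~(2) gives collision mass $\binom{r}{2}\frac{1+\thresh}{\dims}$, so the available gap is the bare factor $\frac{1+\thresh}{1+c\thresh}$, which tends to $2$ as $\thresh\to 1$ no matter how small $c$ is. Consequently your scheme cannot get by with ``$\lambda\asymp\min\{1,r^2\normtwo{\p}^2\}$ up to constants'': it needs two-sided bounds on $\lambda$ whose upper and lower constants match to within a factor better than $2$, uniformly over all $\p$ (including distributions with atoms of weight comparable to $1/r$, where the second-moment/Bonferroni correction terms are of the same order as $\E[Z]$), and on top of that an amplitude-estimation-on-a-quantum-walk routine that resolves a constant-factor gap in the marked fraction with no logarithmic overhead --- which, unlike Ambainis's element-distinctness decision algorithm, is not an off-the-shelf citation. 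None of this is carried out, and it is where the approach could fail.

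The paper sidesteps all of this, and your ``main obstacle'' paragraph reveals why you missed the simpler route: you conflated the size of the classical subsample with the query budget. Choosing random positions of the string costs no oracle queries; only the element-distinctness subroutine run on those positions does. The paper draws $\bigns=\Theta\big(\sqrt{\dims/\thresh}\big)$ i.i.d.\ samples (random positions), so a Case-(2) heavy atom of weight $\sqrt{\thresh/\dims}$ is hit twice with high probability rather than missed, and proves a lemma (Markov for Case~(1), Chebyshev with a variance bound involving $\mathrm{pow}_3$ for Case~(2)) showing that with the constants chosen appropriately the subsample has no collision at all in Case~(1) and at least one collision in Case~(2), each with probability $.99$; admittedly the constants there also require some care when $\thresh=\Theta(1)$, but the structure does not hinge on near-tight two-sided estimates. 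It then runs Ambainis's element-distinctness algorithm on the length-$\bigns$ substring, at cost $O(\bigns^{2/3})=O(\dims^{1/3}/\thresh^{1/3})$. This turns the problem into pure detection (``is there any collision among the chosen positions?''), so no estimation of a marked fraction, and no delicate constant-matching, is ever needed.
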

Note that, as discussed in the introduction, this result does not imply anything in terms of total variation distance, as the latter is always at most $1$; however, we believe this result to be of interest for at least two reasons: (1)~it is in itself a reasonable (and often useful) testing question, when total variation distance is not the most relevant distance measure, and implies, for instance, testing $\chisquare{\p}{\uniform_\dims} \leq c\cdot \thresh$ from $\kldiv{\p}{\uniform_\dims} > \thresh$; and (2)~one can show that this complexity is tight, by a reduction to the $\theta$-to-1 collision problem, which provides additional evidence for~\Cref{conj:tv}. 
\begin{proof}
The main ingredient of the proof is the following lemma, which guarantees that taking $\bigns = \Theta({\dims}/{\thresh})$ from the unknown distribution $\p$ is enough to obtain (with high constant probability) a multiset of elements with, in one case, no collisions, and in the other at least one collision:
\begin{lemma}\label{lem:distinctelements}
For $\thresh \geq 1$, there exists a constant $c\in(0,1)$ such that taking $\bigns$ \iid samples from an unknown $\p$ over $[\dims]$ results in a multiset $S$ satisfying the following with probability at least $.99$:
\begin{itemize}
	\item If $\chisquare{\p}{\uniformOn{\dims}} \leq c\cdot \thresh$, then all elements in $S$ are distinct;
	\item If $\chisquare{\p}{\uniformOn{\dims}} \geq \thresh$, then at least two elements in $S$ are identical;
\end{itemize}
as long as $1601\cdot \frac{\dims}{\thresh} \leq \bigns \leq \frac{1}{10c}\cdot \frac{\dims}{\thresh}$. (In particular, taking $c \eqdef \frac{1}{16010}$ suffices to ensure such a choice of $\bigns$ is possible.)
\end{lemma}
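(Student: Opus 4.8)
The plan is to reduce the lemma to a classical birthday-paradox estimate. The only feature of the two hypotheses that matters is the \emph{collision probability} $\rho \coloneqq \sum_{i=1}^{\dims}\p_i^2$, which is exactly $\Pr[\bX_a=\bX_b]$ for any two of the i.i.d.\ draws $\bX_1,\dots,\bX_{\bigns}$. Since $\sum_{i}\p_i^2 = \tfrac1{\dims}\bigl(1+\chisquare{\p}{\uniform_\dims}\bigr)$, the first hypothesis becomes $\rho \le \tfrac{1+c\thresh}{\dims}$ and the second becomes $\rho \ge \tfrac{1+\thresh}{\dims}\ge\tfrac{\thresh}{\dims}$. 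Let $\bC = \sum_{\{a,b\}}\indic{\bX_a=\bX_b}$ be the number of colliding pairs, a sum of $\binom{\bigns}{2}$ indicators with $\E[\bC]=\binom{\bigns}{2}\rho$. The two bullets then amount to showing ``$\Pr[\bC\ge 1]$ is small'' and ``$\Pr[\bC=0]$ is small'', respectively.

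For the first bullet I would use only the first moment: $\Pr[\bC\ge 1]\le\E[\bC]=\binom{\bigns}{2}\rho\le\binom{\bigns}{2}\tfrac{1+c\thresh}{\dims}$, so the upper bound on $\bigns$ together with a small enough choice of the constant $c$ drives this below $.01$. For the second bullet the first moment is not enough, and I would invoke Chebyshev: $\Pr[\bC=0]\le\var[\bC]/\E[\bC]^2$. Expanding $\var[\bC]$ over ordered pairs of index-pairs, disjoint index-pairs give independent indicators and hence no contribution; the $\binom{\bigns}{2}$ diagonal terms contribute at most $\E[\bC]$; and the $O(\bigns^3)$ terms in which two index-pairs share exactly one index each contribute $\sum_i\p_i^3$. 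Bounding $\sum_i\p_i^3\le\norminf{\p}\sum_i\p_i^2\le\bigl(\sum_i\p_i^2\bigr)^{3/2}=\rho^{3/2}$ gives $\var[\bC]\le\E[\bC]+O(\bigns^3)\rho^{3/2}$, so that $\Pr[\bC=0]$ is bounded by a constant multiple of $\tfrac1{\E[\bC]}+\tfrac1{\bigns\sqrt\rho}$. In the second case $\rho\ge\thresh/\dims$, so $\E[\bC]=\binom{\bigns}{2}\rho$ and $\bigns\sqrt\rho\ge\bigns\sqrt{\thresh/\dims}$ are both large (growing with the leading constant in the lower bound on $\bigns$), which makes $\Pr[\bC=0]\le.01$.

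The one delicate point — and the step I expect to take the most care — is reconciling the two bullets: the first-moment estimate forces $\bigns$ to be small enough that $\E[\bC]$ sits below a tiny constant in the close case, while the second-moment estimate forces $\bigns$ to be large enough that $\E[\bC]$ is large and $\var[\bC]$ is negligible against $\E[\bC]^2$ in the far case. The multiplicative gap between the close-case value $\tfrac{1+c\thresh}{\dims}$ and the far-case value $\tfrac{1+\thresh}{\dims}$ of $\rho$ is precisely what leaves room for a nonempty window of admissible $\bigns$, and pinning down the constants ($1601$, $\tfrac1{10c}$, and $c$) so that this window is nonempty in the claimed parameter range is the bookkeeping one has to get right. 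Everything above is purely classical; the quantum speedup enters only afterwards, when the ``all draws distinct versus at least one repeat'' decision about the multiset $S$ is resolved with the quantum element-distinctness / collision-finding primitive.
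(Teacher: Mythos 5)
Your proposal is correct in outline and follows essentially the same route as the paper's proof: the same collision-count statistic, Markov's inequality in the close case, and Chebyshev in the far case with the variance controlled by the diagonal terms plus the pairs sharing one index, the third moment bounded by the $3/2$ power of the second exactly as in the paper. The only differences are cosmetic — you work with the raw moments $\sum_i \p_i^2$ and $\sum_i \p_i^3$ where the paper uses the centered moments of $\p-\uniformOn{\dims}$ — and the constant bookkeeping you defer (the window for the number of samples and the choice of $c$) is precisely the step the paper itself resolves only by fixing those constants.
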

Before proving this lemma, we describe how it implies our stated complexity upper bound. \Cref{lem:distinctelements} guarantees that we can reduce our testing problem to that of deciding if, given oracle access to a string of size $\bigns = \Theta(\sqrt{\dims/\thresh})$, whether all the elements in it are distinct. This problem is solved by Ambainis' element distinctness quantum-walk algorithm~\cite{Amb07} using $O(\bigns^{2/3}) = O({\dims^{1/3}/\thresh^{1/3}})$ quantum queries.

\begin{proof}[Proof of~\Cref{lem:distinctelements}]
Suppose we take $\bigns$ \iid samples $X_1,\dots, X_\bigns$ from $\p$, and count the number $Z$ of collisions among them:
\[
		Z \eqdef \sum_{1\leq i < j \leq \bigns} \indic{X_i = X_j}
\]
Letting $\del \eqdef \p - \uniformOn{\dims}$ and $\pow_t(x) \eqdef \sum_{i=1}^{\dims} x_i^t$ for all integer $t\geq 0$ and vector $x\in\R^{\dims}$ (so that $\del_i = \p_i - 1/\dims$ for all $i$), we have, $\pow_1(\del) = 0$, and 
\[
\pow_2(\del) = \normtwo{\p-\uniformOn{\dims}}^2 = \frac{1}{\dims}\chisquare{\p}{\uniformOn{\dims}}
\]
Now, it is not hard to verify that $\bEE{Z} = \binom{\bigns}{2} \normtwo{\p}^2 = \binom{\bigns}{2} (\pow_2(\del)+1/\ab)$, and 
\begin{align} %
\var[Z] 
&= \binom{\bigns}{2} \normtwo{\p}^2 \Paren{1-\normtwo{\p}^2} + 6\binom{\bigns}{3}\Paren{\norm{\p}_3^3 - \norm{\p}_2^4} \notag\\
&\leq \bEE{Z} + 6\binom{\bigns}{3}\Paren{\pow_3(\del) + \frac{3}{\dims}\pow_2(\del)} \label{eq:variance:collisions}  %
\end{align}
From this, we get, setting $\tau \eqdef \sqrt{{\thresh}/{\dims}} \geq 1/\sqrt{\dims}$:
\begin{itemize}
\item If $\chisquare{\p}{\uniformOn{\dims}}\leq c\cdot \thresh$, then $\pow_2(\del) \leq c^2\cdot \tau^2$, and as long as $\bigns \leq \frac{1}{10c\tau}$ we have $\binom{\bigns}{2} (c^2\cdot \tau^2+1/\ab) \leq 1/100$, so that by Markov's inequality
\[
	\probaOf{Z \geq 1 } \leq \probaOf{Z \geq 100\bEE{Z} } \leq \frac{1}{100}
\]
\item If $\chisquare{\p}{\uniformOn{\dims}}\geq \thresh$, then $\pow_2(\del) \geq \tau^2$,  and by Chebyshev's inequality and~\cref{eq:variance:collisions}
\begin{align*}
	\probaOf{Z = 0 } 
	&\leq \probaOf{\abs{Z - \bEE{Z}} \geq \bEE{Z} } 
	\leq \frac{1}{\bEE{Z}} + \frac{4}{\bigns}\cdot \frac{\pow_3(\del) + \frac{3}{\dims}\pow_2(\del)}{(\pow_2(\del)+1/\ab)^2} \\
	&\leq \frac{2}{\bigns(\bigns-1)\tau^2} + \frac{4}{\bigns}\cdot \frac{\pow_2(\del)^{3/2} + \frac{3}{\dims}\pow_2(\del)}{\pow_2(\del)^2} \\
	&\leq \frac{3}{\bigns^2\tau^2} + \frac{4}{\bigns\tau}+ \frac{12}{\bigns\ab\tau^2}\\
	&\leq \frac{3}{\bigns^2\tau^2} + \frac{4}{\bigns\tau}+ \frac{12}{\bigns} \tag{$\tau\geq 1/\sqrt{\dims}$}\\
	&\leq \frac{3}{\bigns^2\tau^2} + \frac{16}{\bigns\tau}
\end{align*}
which is at most $\frac{1}{100}$ for $\bigns \geq \frac{1601}{\tau}$.
\end{itemize}
This proves the lemma.
\end{proof}
This concludes the proof of the upper bound part of~\Cref{thm:giant}. To conclude, it only remains to show that this is, indeed, optimal. 
For this, we need a lower bound of \cite{Kut05}, which generalized a lower bound of Aaronson and Shi~\cite{AS04}:
\begin{theorem}[\cite{Kut05}]
	Let $\dims>0$ and $\cyan{r}\geq 2$ be integers such that $\cyan{r}|\dims$, and let $f:[\dims]\to[\dims]$ be a function to which we have quantum oracle access. Then deciding if $f$ is 1-to-1 or $\cyan{r}$-to-1, promised that one of these holds, requires $\Omega((\dims/\cyan{r})^{1/3})$ quantum queries.
\end{theorem}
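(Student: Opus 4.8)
The plan is to prove this by the polynomial method, in the form developed by Aaronson--Shi~\cite{AS04} and streamlined by Kutin~\cite{Kut05}; note that the positive-weight adversary method cannot work here, since ``$1$-to-$1$ versus $\cyan{r}$-to-$1$'' is a symmetric, low-certificate-complexity promise problem. Suppose some quantum algorithm makes $T$ queries to $f\colon[\dims]\to[\dims]$ and decides the promise with error at most $1/3$. Encode $f$ by Boolean variables $x_{i,j}=\indic{f(i)=j}$ (with $\sum_{j}x_{i,j}=1$ for every $i$). By the standard symmetrization argument of Beals--Buhrman--Cleve--Mosca--de Wolf, the algorithm's acceptance probability equals a multilinear polynomial $P(x)$ of degree at most $2T$ that lies in $[0,1]$ on all valid encodings, is at most $1/3$ on every $1$-to-$1$ function, and is at least $2/3$ on every balanced $\cyan{r}$-to-$1$ function.

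The first real step is to symmetrize $P$ down to a low-degree \emph{real} polynomial in a small number of integer parameters. I would average $P$ over relabelings of the domain and range (i.e.\ over $S_\dims\times S_\dims$) and over a family of interpolating input distributions: functions that act $\cyan{r}$-to-$1$ on a uniformly random sub-domain of a chosen size, and injectively into fresh range values on the complement. The endpoints of this family are a genuine $1$-to-$1$ function and a genuine balanced $\cyan{r}$-to-$1$ function (this is where $\cyan{r}\mid\dims$ and $\cyan{r}\ge 2$ are used), so one obtains a polynomial $q$ of degree at most $2T$, bounded in roughly $[0,1]$ over a parameter range of length $m=\Theta(\dims/\cyan{r})$, with $q\le 1/3$ near one end and $q\ge 2/3$ near the other. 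Aaronson--Shi's refinement is to also show $q$ is bounded on the whole \emph{continuous} interval $[0,m]$, and, via a suitable two-parameter version of the construction, to pin down enough of its behaviour near the $\cyan{r}$-to-$1$ endpoint to force a cube-type, rather than a square-type, tradeoff.

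The second step is to feed $q$ into the extremal inequality for low-degree univariate polynomials of~\cite{AS04}: a real polynomial that is bounded by a constant on $[0,m]$ and exhibits this separation-with-extra-structure between its values near the two ends must have degree $\Omega(m^{1/3})$. Applying it with $m=\Theta(\dims/\cyan{r})$ yields $2T=\Omega((\dims/\cyan{r})^{1/3})$, hence $T=\Omega((\dims/\cyan{r})^{1/3})$, which is the claim.

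The main obstacle is the symmetrization/extension step. A naive one-parameter symmetrization gives a polynomial known to be bounded only at the $O(\dims/\cyan{r})$ integer points corresponding to exactly-$\ell$-to-$1$ functions, which by the Markov brothers' inequality yields at best $\Omega((\dims/\cyan{r})^{1/2})$ --- and nothing at all without controlling the endpoint structure. Getting the true cube-root bound requires both the continuous-interval extension and the two-parameter construction; Kutin's contribution, which I would follow, is to make this reduction work cleanly for all $\cyan{r}\ge 2$ and with no auxiliary lower bound imposed on $\dims$. Everything downstream of the reduction is the extremal polynomial inequality, which may be invoked as a known result.
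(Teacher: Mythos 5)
The paper does not prove this statement at all: it is quoted verbatim from Kutin~\cite{Kut05} (generalizing Aaronson--Shi~\cite{AS04}) and used as a black box to establish optimality in \Cref{thm:giant}, so there is no in-paper argument to compare yours against. Your outline is a faithful recapitulation of the known polynomial-method proof --- BBCMW representation, symmetrization over relabelings and an interpolating family of inputs, then an extremal degree bound giving the cube root --- and at that level it is sound, though it is an outline rather than a proof: the two genuinely hard steps (the symmetrization argument and the polynomial inequality forcing degree $\Omega(m^{1/3})$ rather than $\Omega(m^{1/2})$) are invoked from the cited works rather than carried out, which is acceptable here only because the paper itself treats the whole theorem as imported.

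One caveat worth flagging: the specific one-parameter construction you describe --- act $\cyan{r}$-to-$1$ on a random sub-domain and map the complement ``injectively into fresh range values'' --- is essentially Shi's argument, and it needs a range strictly larger than the domain (Shi required range size roughly $3\dims/2$) to have fresh values available. The statement as used in the paper has $f:[\dims]\to[\dims]$, i.e.\ range equal to domain, and removing the enlarged-range assumption is precisely Kutin's (and, independently, Ambainis's) contribution, via a different interpolating family. You acknowledge following Kutin for this, but be aware that the step you sketch would not, as written, prove the theorem in the form stated; the two-parameter construction is not an optional refinement but the load-bearing part for this range regime.
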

When we view this function as a quantum string oracle for a probability distribution, the function being 1-to-1 corresponds to the uniform distribution on $[\dims]$. In the other case, the distribution is uniform on a subset of size $[\dims/\cyan{r}]$, for any $\cyan{r}\geq \thresh+1$ dividing $\dims$. An easy calculation shows that the second distribution is at $\chi^2$ divergence
\begin{equation}
	\chisquare{\p}{\uniformOn{\dims}} 
	= \sum_{i \in [\dims]} \left(\frac{\p_i^2}{1/\dims}\right) - 1
	= \dims\cdot  \frac{\cyan{r}^2}{\dims^2} \cdot \frac{\dims}{\cyan{r}}	 - 1
	= \cyan{r} -1
	\geq \thresh,
\end{equation}
from uniform, which completes the proof.
\end{proof}
 
\printbibliography

\appendix
\section{Reduction from Identity to Uniformity Testing}
  \label{app:identity:reduction}
As mentioned in the introduction, there is a known reduction from identity to uniformity testing, due to Goldreich~\cite{Goldreich:16} and inspired by~\cite{DK:16}: which, in a blackbox way, converts an instance of uniformity testing (in total variation distance) with reference distribution $\q$ over $[\dims]$ and distance parameter $\dst$ to an instance of uniformity testing over $[4\dims]$ and distance parameter $\dst/4$. (Here, we follow the exposition and parameter setting of~\cite[Section~2.2.3]{CanonneTopicsDT2022}.)

To be able to use it in our setting, all we need to check is that this blackbox reduction $\Phi_\q$ preserves access to ``the code'': that is, given the code $C_\p$ for a probability distribution $\p$ over $[\ab]$, that we can efficiently have access to the code $C_{\p'}$ for the resulting distribution $\p'=\Phi_\q(\p)$ over $[4\dims]$. 
To do so, note that $\Phi_\q$ is the composition of 3 successive mappings,
\[
  \Phi_\q = \Phi_\q^{(1)}\circ\Phi_\q^{(2)}\circ\Phi_\q^{(3)}
\]
where $\Phi_\q^{(3)}\colon[\ab]\to[\ab]$, $\Phi_\q^{(2)}\colon[\ab]\to[\ab+1]$, and , $\Phi_\q^{(2)}\colon[\ab+1]\to[4\ab]$. So it suffices to show that each of these 3 mappings does preserve access to the code generating a sample from the resulting distribution.
\begin{itemize}
  \item The first, $\Phi_\q^{(3)}$, is the easier, as it consists only in mixing its input with the uniform distribution:
  \[
    \Phi_\q^{(3)}(\p) = \frac{1}{2}\p + \frac{1}{2}\uniformOn{\dims}
  \]
  for which a circuit can be easily obtained, given a circuit for $\p$.\vspace{-0.5em}
  \item The second, $\Phi_\q^{(2)}$, ``rounds down'' the probability of each of the $\dims$ elements of the domain, and sends the remaining probability mass to a $(\dims+1)$-th new element:
  \[
      \Phi_\q^{(2)}(\p)_i = 
      \begin{cases}
        \frac{\flr{4\dims\q_i}}{4\dims\q_i}\cdot \p_i, & i\in[\dims]\\
        1- \sum_{i=1}^\dims \frac{\flr{4\dims\q_i}}{4\dims\q_i}\cdot \p_i, & i=\dims+1
      \end{cases}
  \]
  This corresponds to adding to the circuit $C_\p$ for $\p$ a ``postprocessing circuit'' which, if the output of $C_\p$ is $i$, outputs $i$ with probability $\frac{\flr{4\dims\q_i}}{4\dims\q_i}$ (and $\dims+1$ otherwise).\vspace{-0.5em}
  \item The third, $\Phi_\q^{(1)}$, assumes that the reference distribution $\q$ is ``grained'' (namely, all its probabilities are positive multiples of $1/(4\dims)$), which will be the case after the first two mappings\footnote{Specifically, when chaining the three mappings, the reference distribution called $\q$ here is actually $\Phi_\q^{(2)}\circ \Phi_\q^{(3)}(\q)$.} fully known). Having partitioned $[4\dims]$ in sets $S_1,\dots, S_\dims$ where 
  \[
    |S_i| = 4\dims\cdot \q_i \geq 1
  \]
  and $\Phi_\q^{(1)}$ is given by
  \[
      \Phi_\q^{(3)}(\p)_i = \sum_{j=1}^\dims \frac{\p_i}{|S_i|} \indic{j\in S_i} ,\qquad  i\in[4\dims]\,.
  \]
  This corresponds to adding to the circuit $C_\p$ for $\p$ a ``postprocessing circuit'' which, if the output of $C_\p$ is $i$, outputs an element of $S_i$ uniformly at random. (Importantly, $S_1,\dots, S_\dims$ are uniquely determined by $\q$, and do not depend on $\p$ or $C_\p$ at all.)
\end{itemize}
To summarize, each of these three mappings can be implemented to provide, given a circuit $C_\p$ for $\p$, a circuit $C_{\p'}$ for the output $\p'$, so that altogether the reduction can be implemented in a way which preserves access to ``the code.''
 \end{document}